\documentclass[11pt]{article}

\usepackage{amssymb,amsmath,amsthm,amsfonts}

% fonts
\usepackage[T1]{fontenc}
\usepackage[tt=false, type1=true]{libertine}
\usepackage[varqu]{zi4}
\usepackage[libertine]{newtxmath}

\usepackage[margin=1in]{geometry}
\setlength{\parindent}{0pt}
\setlength{\parskip}{2pt}

% Space saving tricks
\usepackage{enumitem}
\setlist{nosep,topsep=0pt,leftmargin=*}

\usepackage{hyperref}
\hypersetup{
    colorlinks,
    allcolors=myblue
}

\usepackage[sortcites,sorting=nyt,style=alphabetic]{biblatex}
\addbibresource{ref.bib}

\usepackage[ruled]{algorithm2e} % For algorithms
    
    \SetAlFnt{\small}
    \SetAlCapFnt{\small}
    \SetAlCapNameFnt{\small}
    \SetAlCapHSkip{0pt} 
    \IncMargin{-\parindent}

\usepackage{xspace,nicefrac,tcolorbox,xifthen,tikz,subcaption,enumitem,physics}
\usetikzlibrary{shapes, arrows, positioning}

\usepackage[bb=dsserif]{mathalpha}

\usepackage{xcolor}
    % \definecolor{myred}{HTML}{ea4335}
    % \definecolor{mygreen}{HTML}{41a756}
    \definecolor{myblue}{HTML}{4285f4}

\usepackage[capitalize]{cleveref}
     % oxford comma for \Cref{thm1,thm2,thm3}

% \usepackage{color-edits}
% \usepackage[suppress]{color-edits}  % Uncomment to hide edits
% \addauthor[Giannis]{gf}{teal}
% \addauthor[Mingfei]{mz}{violet}
% \addauthor[Gagan]{ga}{orange}
% \newcommand{\todo}[1]{\textcolor{olive}{[\textbf{TODO:} #1]}}
% \newcommand{\mingfeinote}[1]{{\color{magenta}{#1}}}
% \newcommand{\todo}[1]{}

\renewcommand{\comment}[1]{}
\renewcommand{\paragraph}[1]{\smallskip\noindent\textit{#1.}}
\newcommand{\boldparagraph}[1]{\smallskip\noindent\textbf{#1.}}

\let\a\alpha
\let\b\beta

\let\e\varepsilon
\let\d\delta
\let\l\lambda
\let\s\sigma
\let\sub\subseteq

\newcommand{\R}{\mathbb{R}}

\DeclareMathOperator*{\E}{\mathbb{E}}
\DeclareMathOperator*{\PR}{\mathbb{P}}
\newcommand{\Es}{\E\nolimits}
\newcommand{\PRs}{\PR\nolimits}

\newcommand{\Ex}[2][]{\E_{#1}\left[#2\right]}
\newcommand{\Exs}[2][]{\Es_{#1}\left[#2\right]}
\newcommand{\ExC}[3][]{\E_{#1}\left[#2\phantom{\Big|}\middle|\phantom{\Big|}#3\right]}
\renewcommand{\Pr}[2][]{\PR_{#1}\left[#2\right]}
\newcommand{\Prs}[2][]{\PRs_{#1}\left[#2\right]}

\newcommand{\One}[1]{\mathbb{1}\left[#1\right]}
\usepackage{xifthen}
\newcommand{\Line}[4]{%
    #1&%
    \;\ifthenelse{\isempty{#2}}{\phantom{=}}{#2}\;%
    #3%
    \ifthenelse{\isempty{#4}}{}{&&\qquad\left(#4\big.\right)}%
}

\newcommand{\term}[1]{\ensuremath{\mathtt{#1}}}
\newcommand{\opt}{\term{OPT}}
\newcommand{\reg}{\term{Reg}}

\newcommand{\calF}{\mathcal{F}}
\newcommand{\calC}{\mathcal{C}}
\newcommand{\calL}{\mathcal{L}}
\newcommand{\calA}{\mathcal{A}}

\newcommand{\calT}{\mathcal{T}}
\newcommand{\Clip}{C_{\mathtt{Lip}}}
\newcommand{\btt}{\mathtt{b}}
\newcommand{\KL}[2]{D_{\mathtt{KL}}\left(#1 \, \middle\Vert \, #2\right)}

\newcommand{\calFmul}{\calF_{\term{mul}}}
\newcommand{\calFlip}{\calF_{\term{Lip}}}

\newtheorem{theorem}{Theorem}[section]

\newtheorem{lemma}[theorem]{Lemma}

\newtheorem{proposition}[theorem]{Proposition}

\theoremstyle{definition}

\newcommand{\citet}[1]{\textcite{#1}}

\title{No-Regret Algorithms in non-Truthful Auctions with Budget and ROI Constraints}

\author{
    Gagan Aggarwal\\
    Google Research\\
    \texttt{gagana@google.com}%
    \and
    Giannis Fikioris%
    \thanks{Supported in part by the Department of Defense (DoD) through the National Defense Science \& Engineering Graduate (NDSEG) Fellowship, the Onassis Foundation -- Scholarship ID: F ZS 068-1/2022-2023, and AFOSR grant FA9550-23-1-0068.}\\
    Cornell University\thanks{Part of this work was done while Giannis Fikioris was visiting Google as a Student Researcher in the Market Algorithms team.}\\
    \texttt{gfikioris@cs.cornell.edu}%
    \and
    Mingfei Zhao\\
    Google Research\\
    \texttt{mingfei@google.com}%
}

\date{\vspace{-25pt}}

\begin{document}

\maketitle{}

\begin{abstract}
    Advertisers are increasingly using automated bidding to optimize their ad campaigns on online advertising platforms. Autobidding allows an advertiser to optimize her objective subject to various constraints, e.g. average ROI and/or budget constraints. In this paper, we study the problem of designing online autobidding algorithms to optimize value subject to ROI and budget constraints when the platform is running a first price auction or a mixture of first and second price auctions.  

We consider the following stochastic setting: There is one item for sale in each of $T$ rounds. In each round, the buyers submit their bids and an auction is run to sell the item. We focus on the bidding problem of one buyer, possibly with budget and ROI constraints. We assume that the buyer's value and the highest competing bid are drawn i.i.d. from some unknown (joint) distribution in each round. Our goal is to design a low-regret bidding algorithm to submit per-round bids on behalf of this buyer such that the buyer's constraints are satisfied. Our benchmark is the objective value achievable by the best possible Lipschitz function that maps values to bids, which is rich enough to best respond to many different correlation structures between value and highest competing bid, e.g. positive or negative correlation.
Our main result is an algorithm with full information feedback (i.e. the bidder observes the highest competing bid after each round) that guarantees a near-optimal $\tilde O(\sqrt T)$ regret with respect to the best Lipschitz function that maps values to bids. Our result applies to a wide range of auctions, most notably including any mixture of first and second price auctions (where the price is a convex combination of the first and second price). In addition, our result holds for both value-maximizing buyers and quasi-linear utility-maximizing buyers.

We also study the bandit setting, where the algorithm only observes whether the bidder wins the auction or not. In this setting, we show an $\Omega(T^{2/3})$ lower bound on the regret for first-price auctions, showing a large disparity between the full information and bandit settings. We also design an algorithm with a regret bound of $\tilde O(T^{3/4})$, when the value distribution is known and is independent of the highest competing bid.

\end{abstract}

\section{Introduction}

With the growth of online markets in terms of both complexity and scale, advertisers are increasingly turning towards autobidding to optimize their ad campaigns on online advertising platforms. Autobidding allows an advertiser to use an optimization algorithm to generate bids on her behalf, rather than manually bidding for each ad query. The advertiser provides high-level goals and constraints to the autobidder, which bids on her behalf in order to optimize her objective, while satisfying her constraints.

In this paper, we study the problem of designing algorithms for autobidding on behalf of a buyer. We consider a stochastic setting with $T$ rounds, in each of which one item is sold via an auction. In each round, the information of this round, including the buyer’s value and the highest competing bid, are drawn i.i.d. from some unknown (joint) distribution. The autobidder submits a bid to the auction based on her value and the history. If the bid is at least the highest competing bid, the bidder wins the current round and pays a price. The bidder has a budget constraint that limits the total payment, as well as a Return-on-Investment (ROI) constraint which requires that the total value in the winning rounds is at least a fraction of the total payment. The goal is to design an online bidding algorithm to maximize the bidder's objective subject to both budget and ROI constraints.
To quantify an algorithm's performance, we use (additive) regret against the objective value obtained by the best bidding strategy that knows the underlying distribution.

There has been a lot of recent work on the problem of designing algorithms for autobidding in stochastic settings. One line of work addresses the problem of designing no-regret online bidding algorithms for buyers with budget constraints~\cite{DBLP:conf/sigecom/BalseiroG17} or with both budget and ROI constraints~\cite{DBLP:conf/www/FengPW23} {\em when the seller is using a truthful} auction, e.g. second price.
These works use the best bidding sequence as the benchmark. Another line of work studies bidding in non-truthful auctions. Bidding in a non-truthful auction is usually harder than bidding in a truthful one, and this is reflected in the weaker benchmark used by these papers – the best constant pacing (also sometimes called uniform bidding), where the bid in each round is the true value multiplied by a constant multiplier~\cite{DBLP:conf/sigecom/FikiorisT23, DBLP:journals/corr/LucierPSZ23}. \cite{DBLP:journals/corr/CastiglioniCK23} study a similar setting with a stronger benchmark -- the best bid per value -- but let their bounds depend on the number of possible values and bids that the bidder can use, which in general can be uncountably many.

In this paper, we study the problem of bidding in non-truthful auctions and design no-regret algorithms against a stronger benchmark than the best constant pacing -- we present algorithms that have low regret compared to the {\em best Lipschitz bidding function that maps values to bids.}
Due to the generality of Lipschitz functions this benchmark can best-respond to a range of different correlations between the buyer's value and the highest competing bid, e.g. positive correlation for some values and negative correlation for others.

%In contrast to second-price auctions, sequential first-price auctions have not received as much attention however.
%\cite{DBLP:conf/sigecom/FikiorisT23} offer a no-regret algorithm when the player is only budget constrained but it is against a weak benchmark.
%\cite{DBLP:journals/corr/CastiglioniCK23} and \cite{DBLP:journals/corr/LucierPSZ23} offer no-regret algorithms for both budget and ROI constraints but work if the number of bids and values are finite or against the aforementioned weaker benchmark.
%This leaves a big gap for designing such algorithms for general sequential first-price auctions.
%The importance of this gap is amplified by the fact that many online platforms are switching to first-price auctions, e.g. \todo{citations}.

\boldparagraph{Our results and techniques}
We first consider the full-information setting where the bidder observes the highest-competing bid at the end of each round. 
We prove that there is an algorithm that can get near-optimal regret with respect to the best Lipschitz bidding function, i.e., it achieves an objective value that is close to the one achieved by using any Lipschitz continuous function that maps values to bids.
The main result for this setting is as follows:

\begin{theorem}[Informal version of \Cref{thm:tight:main_tight}]\label{thm:full_in_intro}
    There is an algorithm that achieves $\tilde O(\sqrt{T})$\footnote{$\tilde O(\sqrt{T}) = O(\sqrt{T}\cdot \text{poly}(\log(T)))$.} regret while satisfying both the budget and ROI constraints, with respect to the best Lipschitz bidding function given the knowledge of the distribution. The result applies to various classes of auctions (see \Cref{asmp:full}) including first-price auctions, second-price auctions and a hybrid of both. The result {also} applies to bidders with a value-maximizing or quasi-linear utility-maximizing objective.
\end{theorem}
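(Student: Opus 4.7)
The plan is to combine a Lagrangian (primal--dual) decomposition of the constrained bidding program with online learning on both the dual variables and the per-round bidding rule. Introduce multipliers $\lambda_B \ge 0$ and $\lambda_R \ge 0$ for the budget and ROI constraints respectively; the $T$-round Lagrangian then decomposes into a sum of independent per-round problems, so that for any fixed $(\lambda_B, \lambda_R)$ the optimal bid depends only on the current value $v_t$ and on the (unknown) joint distribution of $(v,d)$. Standard complementary-slackness arguments (using that bidding zero is always feasible) show that the optimal multipliers lie in a bounded box $[0, \btt]^2$ for a constant $\btt$ depending only on the budget, ROI target, and value range. This reduces the ``infinite-dimensional'' unknown Lipschitz bidding function to a two-dimensional unknown $(\lambda_B^*, \lambda_R^*)$ together with an inner best-response problem.

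First, I would prove a structural lemma asserting that the best Lipschitz bidding function against $\calD$ achieves essentially the same objective as the Lagrangian optimum at $(\lambda_B^*, \lambda_R^*)$, and that for each fixed $\lambda$ the per-round Lagrangian-optimal response $b^*_\lambda(v) = \argmax_b \E_{d \sim \calD(\cdot \mid v)}[L_\lambda(v,b,d)]$ is itself $\Clip$-Lipschitz in $v$. The Lipschitz property follows because, under \Cref{asmp:full}, the payment rule is a convex combination of first- and second-price and the per-round Lagrangian is sufficiently regular in $b$ that its argmax varies continuously with $v$; the value-maximizing versus utility-maximizing distinction only relabels the linear combination inside $L_\lambda$ and is handled uniformly.

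Next, I would run a two-timescale procedure. On the slow timescale, apply projected online gradient descent with step sizes $\etaB, \etaR = \Theta(1/\sqrt{T})$ to $(\lambda_B, \lambda_R)$, using the observed per-round constraint residuals (spend minus budget pace, ROI shortfall) as unbiased stochastic gradient signals; this yields $O(\sqrt{T})$ dual regret by the standard OGD analysis on a compact set. On the fast timescale, maintain an empirical estimate $\widehat{\calD}_t$ of the joint distribution of $(v,d)$ from the full-information observations, partition the value space into $\Theta(\sqrt{T})$ bins, and in round $t$ play the bid $\widehat{b}_{\lambda_t}(v_t)$ that best-responds to $\widehat{\calD}_t$ within the bin containing $v_t$. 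Uniform concentration (DKW plus a union bound over an $O(1/\sqrt{T})$-net of $\lambda$) bounds the gap between $\widehat{b}_{\lambda_t}$ and $b^*_{\lambda_t}$ by $\tilde O(1/\sqrt{T})$ per round, while the value discretization contributes at most $\Clip \cdot T / \sqrt{T} = O(\sqrt{T})$ by Lipschitzness.

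Finally, to satisfy both constraints with high probability (not merely in expectation) I would shrink the effective budget and ROI target by an $\tilde O(\sqrt{T})$ safety cushion and halt if either envelope is about to be breached; a martingale concentration argument then shows the cushion is sufficient while affecting the regret only at the target $\tilde O(\sqrt{T})$ scale. I expect the principal obstacle to be the structural lemma connecting the Lipschitz benchmark to the Lagrangian best response: unlike the best-constant-pacing benchmark, the Lipschitz function class is infinite-dimensional, so the vanishing-duality-gap claim requires showing that a Lipschitz function can essentially reproduce the unrestricted Lagrangian maximizer $b^*_{\lambda^*}(\cdot)$, which in turn needs the right choice of $\Clip$ together with the smoothness of the per-round reward in $v$ guaranteed by \Cref{asmp:full}.
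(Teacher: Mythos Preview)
Your proposal has a genuine gap at exactly the point you flagged as the principal obstacle: the structural lemma is false. The Lagrangian best response $b^*_\lambda(v) = \argmax_b \E_{d\mid v}[L_\lambda(v,b,d)]$ is \emph{not} Lipschitz in $v$ in general, even under \Cref{asmp:full}. A simple counterexample in first-price: let $d$ be independent of $v$ with support on two atoms $\{0.3,0.7\}$, each with mass $1/2$. For fixed $\chi,\psi>0$ the expected Lagrangian at bid $0.3$ is $\tfrac12(\chi v-0.3\psi)$ and at bid $0.7$ is $\chi v-0.7\psi$; the argmax jumps from $0.3$ to $0.7$ as $v$ crosses $1.1\psi/\chi$. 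Nothing in \Cref{asmp:full} prevents this, and no reasonable choice of $\Clip$ repairs a jump discontinuity. Your downstream plan (discretize $v$ into $\Theta(\sqrt T)$ bins, claim $O(\Clip\sqrt T)$ discretization error, DKW over a $\lambda$-net) collapses once $b^*_\lambda$ can be discontinuous. More conceptually, your reduction would make the Lipschitz benchmark equal to the unrestricted per-value best response, which is exactly what the paper is \emph{not} claiming; the Lipschitz class is the benchmark precisely because the unrestricted optimum can be wild.

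The paper takes a completely different route and never argues that any optimal response is Lipschitz. It treats the Lipschitz class itself as the primal action space and attacks the resulting exponential covering number head-on. It builds a hierarchy $\calF_0\supset\calF_1\supset\cdots\supset\calF_M$ of $2^{-i}$-covers (so $|\calF_i|\le\exp(O(L2^i))$), arranges them as a tree, and at each internal node runs a Hedge instance over the node's children augmented with a ``good'' action (the maximum bid among its leaves). The key inequality is that this extra action is $\Delta_i$-good with $\Delta_i=O(2^{-i})$, so the per-node regret is $O(U\sqrt{T\Delta_i\log K_i})=O(U\sqrt{LT})$ and summing over $M=O(\log T)$ levels gives $\tilde O(U\sqrt{LT})$; naive Hedge over $\calF_M$ would give only $\tilde O(T^{2/3})$. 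Two further ingredients you are missing: (i) the multiplier bound $\btt$ you invoke depends on the slack parameter $\b$ in \eqref{eq:pre:beta}, which is \emph{unknown}, so the primal must handle adaptively growing reward ranges $U_t$ (the paper does this with a time-varying step size $\eta_t\propto 1/U_t$) and must satisfy an \emph{interval} regret bound, not just total regret, for the primal--dual framework of \Cref{thm:pre:primal_dual} to apply; (ii) exact ROI satisfaction is obtained not by a fixed $\tilde O(\sqrt T)$ cushion but by interleaving a second algorithm that maximizes $v-p$ to accumulate slack (\Cref{lem:tight:reduction}), since the ROI residual is not a priori bounded by a martingale with known range.
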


% \gfcomment{The actual benchmark we use is the optimal mixture of Lipschitz bidding functions but I think this is too complicated for the introduction}
% \gacomment{yeah, I think we can leave that out of the intro}
%\todo{mention that it is an exponential time algorithm}

To the best of our knowledge, this is the first algorithm that achieves near-optimal regret bounds against the best Lipschitz bidding function for non-truthful auctions under budget and/or ROI constraints.\footnote{See \Cref{sec:related} for comparisons with prior works.}
Our result applies to any input distribution under mild assumptions (see \Cref{sec:prelim}).

Our algorithmic framework is based on the primal/dual framework for online learning with constraints \cite{DBLP:conf/sigecom/BalseiroG17,DBLP:conf/focs/ImmorlicaSSS19,DBLP:conf/icml/CastiglioniCK22,DBLP:conf/colt/FikiorisT23,DBLP:journals/corr/CastiglioniCK23,DBLP:conf/icml/BalseiroLM20}.
In this framework, to manage the global constraints, the `core' algorithm deploys two algorithms against each other.
Both algorithms aim to optimize an objective without considering the constraints.
On the one hand, the primal algorithm picks an action (which is subsequently used in the actual online learning problem) to maximize a function that comes from the Lagrangian of the problem.
On the other hand, the dual algorithm picks Lagrangian multipliers to minimize the same function.
The two algorithms participate in this sequential unconstrained stochastic zero-sum game, which implies the guarantees for the original constrained problem.

While the dual algorithm uses a standard instance of Online Gradient Descent to pick the scalars that represent the Lagrangian multipliers, designing the primal algorithm is often much more complicated and requires knowledge specific to the original problem.
%\gfreplace{Thus, given}{While the above discussion focuses on second-price auctions, it illustrates} the versatility and strength of this primal/dual paradigm\gfedit{. Thus}, research often focuses on designing good primal algorithms for the problem at hand. 
We develop the primal algorithm for our main result in \cref{sec:full}.

\boldparagraph{Main Technical Challenges}
Below we list some of the main technical challenges that we need to tackle and give a brief outline of our approach to solving them.

\paragraph{Lagrangian Maximization in Non-Truthful Auctions}
To better explain the challenge, we first consider the problem where the auction used is a second-price (or more generally truthful) auction. The part of the Lagrangian function that depends on the primal algorithm's bid $b$ takes the following form (for either value or quasi-linear utility maximization):
\begin{equation*}
    r(b)
    =
    \One{b \ge d} ( \chi v - \psi d)
    ,
\end{equation*}
where $v$ is the player's value, $d$ is the (unknown) highest competing bid, and $\chi,\psi$ are arbitrary non-negative numbers that depend on the Lagrange multipliers.
Maximizing the above function is straightforward: using $b^* = v \frac{\chi}{\psi}$ implies\footnote{We denote $x^+ = \max\{0, x\}$.} $r(b^*) = ( \chi v - \psi d )^+$, which guarantees maximum reward.
Since $b^*$ does not require knowledge of the highest competing bid $d$, the primal algorithm can pick this bid to guarantee zero regret for maximizing the Lagrangian.
In combination with the low regret guarantee of the dual algorithm, this satisfies the conditions of the primal/dual framework, leading to low regret guarantees for the original problem with constraints.

In contrast to the truthful auction setting where the best bid in every round is independent of the highest competing bid, for non-truthful auctions (e.g. first-price), the best bid in a round that maximizes the Lagrangian is a function of the value, the Lagrangian multipliers, and the highest competing bid.
Since the highest competing bid is unknown, the learner needs to learn the best function that maps values to bids and yields high reward.
However, this learning task is unrealistic, since the best such function might be non-monotone and discontinuous.
A more reasonable goal is to focus on a class of functions that have some structure.
Such a class used in previous work is the class of pacing multipliers, that map values to bids by multiplying by some constant number.
Instead, we focus on the more general class of Lipschitz continuous functions{, which can provide a much stronger benchmark to compete against, even in very simple settings where values and highest competing bids are independent.
For example, if the highest competing bid is constant and the value is not, the best response is a fixed bid, which cannot be expressed by the class of pacing multipliers.
More specifically, consider a value maximizer in first-price auctions with total budget $T/2$ and values that are $1/2$ or $1$, each with probability $1/2$.
Also assume that the highest competing bid every round is $1/2$.
The best Lipschitz bidding function is to bid $1/2$ every round and win all rounds. In contrast, no fixed pacing multiplier can win all rounds and stays within budget, which implies that the two benchmarks differ by a multiplicative factor.}

{While the above example is very simple, it showcases how much weaker the class of pacing multipliers is.
Another case that illustrates their simplicity is that in first-price auctions, an online learner who wants to compete against the best pacing multiplier does not have to consider the ROI constraint, since the optimal such bidding would never use pacing multipliers that would violate that constraint.
Despite the increased complexity of the class of Lipschitz functions, our $\tilde O(\sqrt T)$ regret cannot be improved even if we consider the weaker class}.

Both the class of pacing multipliers, $\calFmul$, and the class of Lipschitz continuous functions, $\calFlip$, have infinite cardinality.
However, the difference between the two becomes apparent when approximating them with a finite subset.
More specifically, $\calFmul$ can be approximated with accuracy $\e$ using a set of size $\Theta(1/\e)$.
In contrast, the same approximation for $\calFlip$ requires a set of size $\exp(\Theta(1/\e))$.
This fact prohibits using the following simple technique for $\calFlip$: discretize the class within accuracy $\e$ and run a standard no-regret algorithm with the finite class.
For $\calFmul$ this yields regret $O(T \e + \sqrt{T \log(1/\e)})$ ($T \e$ is the discretization error and $\sqrt{T \log K}$ is the regret of using $K$ different actions), but for $\calFlip$ this yields regret $O(T \e + \sqrt{T \log(\exp(1/\e))})$.
Picking the optimal $\e$ for each case, for $\calFmul$ we get $\tilde{O}(\sqrt T)$ regret but for $\calFlip$ we get $\tilde{O}(T^{2/3})$ regret, which is suboptimal.

To achieve the near-optimal $\tilde O(\sqrt T)$ regret, we use the structure implied by the finite subset of $\calFlip$.
More specifically, we create a tree where the functions of the finite subset of $\calFlip$ are the leaves and smaller distance between two leaves implies more similarity between the two corresponding functions. This allows us to enhance the standard regret guarantees of learning algorithms to get the improved result.
This tree algorithm can be found in \cref{ssec:full:tree}.

\paragraph{No-Regret Primal Algorithm against Adaptive Adversary and Time-Varying Range} The Lagrangian function that the primal algorithm aims to maximize depends on the Lagrangian multipliers picked by the dual algorithm.
This means that the primal algorithm's guarantees need to hold even against an adaptive adversary since no assumptions can be made for the dual algorithm's behavior, which adapts to the primal's decisions.

In addition to the adaptive behavior of the dual algorithm, the Lagrangian multipliers it picks control the range of the objective that the primal algorithm has to maximize.
For technical reasons (which we discuss in \cref{sec:prelim}), %unlike previous work (which we discuss in detail in \cref{sec:related}), 
we cannot a priori upper bound these multipliers. This means that the primal algorithm needs to maximize a function whose range is time-varying and unknown.
We develop algorithms that tackle this problem and offer regret bounds that would match the bounds we would get had the range been known in advance.
All our algorithms need to guarantee no-regret with time-varying ranges. We first solve this problem in \cref{ssec:full:good}.
In addition, our technique to handle this problem is very general and, we believe, is of independent interest.

\paragraph{From Standard Regret to Interval Regret} The `core' algorithm requires that the primal and dual algorithms have low interval regret (low regret in every interval of rounds).
While this is easily achieved by the Online Gradient Descent that is used for the dual algorithm, this is not automatically achieved by other algorithms, e.g., the Hedge algorithm \cite{DBLP:journals/jcss/FreundS97} has linear interval regret.
In \cref{ssec:full:interval}, we offer a black-box reduction to reduce the problem of standard regret minimization to interval regret minimization with only $\tilde O(\sqrt T)$ error, which also works for the above time-varying range problem.

Finally, we note that our resulting primal algorithm is a chain of algorithms that use the output of each other to generate their own output (we showcase this in \cref{fig:algo_struct}).
Instead of each algorithm outputting a bid that is used by the next algorithm, the output of each algorithm is a distribution over bids.
This makes the performance of an algorithm independent of the sampling of other algorithms; the only sampling performed is by the `final' algorithm.
This leads to simpler chaining of these algorithms and we believe would lead to much more stable guarantees in practice.

\boldparagraph{Bandit Information}
In \cref{sec:bandit} we consider the bandit information setting where the algorithm only observes whether the bid wins the auction or not and the price she pays if she wins.
In sharp contrast to the full-information setting, we prove an $\Omega(T^{2/3})$ lower bound on the regret for any online algorithm in first-price auctions.
While this is known for quasi-linear utility maximization \citet{DBLP:conf/nips/BalseiroGMMS19}, no results are known for value maximization.
Our lower bound is materialized in a very simple setting, as showcased in the theorem that follows.

\begin{theorem}[Informal version of \Cref{thm:bandit_lower_bound}]\label{thm:bandit_lb_intro}
    There exists an instance in value-maximizing first-price auctions where any algorithm with bandit information has regret $\Omega(T^{2/3})$, even when the bidder has a constant value, a total budget of $\Theta(T)$, and no ROI constraint.
\end{theorem}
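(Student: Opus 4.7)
The plan is to prove the lower bound by a two-point information-theoretic (Le~Cam) argument: I exhibit two problem instances that are hard to distinguish under bandit feedback and whose optimal values differ by $\Omega(\e T)$, and then balance the cost of exploration against the exploitation gap. Fix a small parameter $\e$ to be tuned later; both instances have $v_t \equiv 1$ and a total budget $B = \beta T$ for a constant $\beta \in (0,1)$ chosen so that the budget constraint binds at the optimum. The two instances differ only in the distribution of the highest competing bid $d$: I take $D_0$ and $D_1$ to be supported on a small finite set of atoms, with probabilities differing by $\Theta(\e)$ on a narrow ``informative window'' $I^\star \subseteq [0,1]$ of bids and coinciding outside of it. The atoms, the shift, and $\beta$ are chosen so that (a)~the optimal offline strategies (knowing the distribution) achieve objective values differing by $\Theta(\e T)$, and (b)~every bid in $I^\star$ is a constant-factor worse per round than the offline optimum in both instances, even after the remaining budget is reoptimized.

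Under the bandit model for first-price auctions, the only observation at round $t$ is $\mathbf 1[b_t \ge d_t]$ (the price when winning equals the bidder's own bid, so it carries no extra information). For $b_t \notin I^\star$ the observation has identical distribution under $D_0$ and $D_1$ and is uninformative; for $b_t \in I^\star$ it is a Bernoulli whose two parameters differ by $\Theta(\e)$, giving per-sample KL-divergence $\Theta(\e^2)$. Pinsker's inequality then implies that any algorithm distinguishing $D_0$ from $D_1$ with constant probability must place at least $N = \Omega(1/\e^2)$ bids in $I^\star$. Now fix any algorithm. If, on at least one instance, it places at least $N$ bids inside $I^\star$, then by property~(b) it incurs $\Omega(1/\e^2)$ regret on that instance. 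Otherwise, by Pinsker and property~(a) it incurs $\Omega(\e T)$ regret on at least one of the two instances. The worst-case regret is thus at least $\Omega(\min\{1/\e^2,\,\e T\})$, which is optimized at $\e = T^{-1/3}$ to yield $\Omega(T^{2/3})$.

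The main obstacle is realizing property~(b) simultaneously with property~(a). A naive construction places the mass shift at an atom that sits in the support of the offline-optimal mix; then the informative bid is free to use, exploration costs nothing, and the resulting lower bound collapses to $\Omega(\sqrt T)$. Avoiding this requires placing the shifting atoms so that they are Lagrangian-dominated by other atoms outside $I^\star$ for the relevant budget regime, yet their perturbation nevertheless moves the optimal value by $\Theta(\e T)$ --- typically by causing the support of the optimal two-point mix to change between the two instances. Matching these two requirements calls for a careful case analysis of the Pareto frontier of $(F(b),\,bF(b))$ and of how the support of the optimal mix varies with $\beta$ across the two instances, which I expect to be the most delicate step of the argument.
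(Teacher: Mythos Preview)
Your two-point Le Cam plan has a genuine gap: properties~(a) and~(b) cannot both hold in this problem, so no construction of the kind you describe exists. In first-price auctions with bandit feedback and $v_t\equiv 1$, the only information from bidding $b$ is the Bernoulli $\mathbf 1[b\ge d]\sim\text{Bern}(F(b))$, and the per-round contribution of bid $b$ to the value/payment pair is $(F(b),\,bF(b))$. If $F_0$ and $F_1$ agree for every $b\notin I^\star$ (as you assume), then every bidding strategy supported outside $I^\star$ has identical expected value and expected payment under $D_0$ and $D_1$; hence the optimum over such strategies is the same in both instances. But property~(b) forces the offline optimum in \emph{each} instance to avoid $I^\star$ (a bid in the optimal support cannot be ``constant-factor worse per round, even after reoptimization''). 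Therefore $\opt(D_0)=\opt(D_1)$, contradicting~(a). The support-changing mechanism you sketch in the last paragraph cannot rescue this: whichever outside-$I^\star$ mix is optimal under $D_0$ is equally feasible and equally valuable under $D_1$, so the support has no reason to change, and even if it did the optimal \emph{value} would not.

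The paper's proof sidesteps this obstruction by using $K=\Theta(T^{1/3})$ instances rather than two, in the style of Kleinberg--Leighton. The base CDF $F(x)=3/(4-x)$ is engineered so that every bid $d_j$ on a grid of size $K$ is simultaneously optimal: each, when mixed with bid $0$ in the right proportion, exactly exhausts the budget and attains the common optimum $\tfrac{13}{16}T$. The adversary then shifts $\Theta(\e)$ mass onto a single $d_j$, making bidding $d_j$ strictly better by $\Theta(\e)$ per round; crucially, under instance $j$ the informative bid $d_j$ is \emph{optimal}, so there is no per-round exploration cost to exploit. The lower bound instead comes from a pigeonhole over the $K$ instances: the budget caps the total number of bids in the grid by $O(T)$, so under the base measure some $d_j$ is used only $O(T/K)$ times; a KL change-of-measure shows this remains $O(T/K)+O(T\e\sqrt{T/K})$ under instance $j$; an LP-dual argument then converts the bound on $N_j$ into a value upper bound, yielding regret $\Theta(\e T)=\Theta(T^{2/3})$ when $K=\Theta(T^{1/3})$ and $\e=\Theta(1/K)$. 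In short, the per-round exploration penalty you need for a two-point argument is unavailable here; what drives the $T^{2/3}$ rate is the multiplicity of $\Theta(T^{1/3})$ indistinguishable candidate optima.
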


Our lower bound is based on a distribution of highest competing bids that has the following property: for (almost) every pair of values in the support, there is an optimal bidding strategy that uses only those values.
We then slightly modify one value of that distribution adversarially such that our modification is (a) big enough to make every bid other than that value sub-optimal and (b) small enough that the bidder has to waste many rounds on sub-optimal bids before finding the optimal one.
This construction is inspired by the $\Omega(T^{2/3})$ lower bound of \cite{DBLP:conf/focs/KleinbergL03} who study revenue maximization for posted-price mechanisms without constraints.
%\gacomment{It would be good to mention the prior work that motivated this lower bound construction here}
%\gfcomment{Does this work?}
%\gacomment{yes}

%To complement our negative result shown in \Cref{thm:bandit_lb_intro}, we present an algorithm that achieves near-optimal regret bound when the bidder's value and the highest competing bid are independent and that the value distribution is known to the bidder.

%\begin{theorem}[Informal version of \Cref{thm:bandit_ub}]\label{thm:bandit_ub_intro}
%    Assume the bidder's value and the highest competing bid are drawn from two independent distributions $\calD_v$ and $\calD_d$. Then there exists an algorithm given $\calD_v$ and bandit information that achieves $\tilde O(T^{2/3})$ regret. 
%\end{theorem}

To complement our lower bound in \Cref{thm:bandit_lb_intro}, we present a $\tilde O(T^{3/4})$ regret upper bound.

\begin{theorem}[Informal version of \Cref{thm:bandit_ub}]\label{thm:bandit_ub_intro}
    There exists an algorithm given bandit information that achieves $\tilde O(T^{3/4})$ regret with respect to the best Lipschitz bidding function that always satisfies the ROI and budget constraints.
\end{theorem}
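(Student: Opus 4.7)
The plan is an explore-then-commit (ETC) scheme. Partition the $T$ rounds into an exploration phase of length $T_0 = \tilde\Theta(T^{3/4})$ and an exploitation phase of length $T - T_0$. During exploration we use the binary win/loss feedback to learn the CDF $F_d$ of the highest competing bid on a grid; during exploitation we compute, offline, a near-optimal Lipschitz bidding function from the known $F_v$ and the estimate $\hat F_d$, and play it for every remaining round. Independence of $v$ and $d$ is what makes the second step a pure offline optimization rather than a joint learning problem.

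For exploration, fix a uniform grid $0 = b_0 < b_1 < \dots < b_K = 1$ with $K = \Theta(T^{1/4})$ and bid each $b_i$ a total of $n = T_0/K = \Theta(T^{1/2})$ times in round-robin fashion. Each bid at $b_i$ is a $\mathrm{Bernoulli}(F_d(b_i))$ observation, so by Chernoff and a union bound $|\hat F_d(b_i) - F_d(b_i)| \le \tilde O(T^{-1/4})$ for all $i$ with high probability. Invoking the Lipschitz/bounded-density assumption on $F_d$ inherited from \Cref{asmp:full}, monotone interpolation between grid points yields $\|\hat F_d - F_d\|_\infty \le \tilde O(T^{-1/4})$ uniformly on $[0,1]$. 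For exploitation, we solve
\begin{equation*}
    \hat\beta \in \argmax_{\beta \in \calFlip} \Exs_{v\sim F_v}\!\left[\hat F_d(\beta(v))\cdot v\right]
\end{equation*}
subject to the budget and ROI constraints evaluated under $(\hat F_d, F_v)$ and tightened by an additive $\tilde O(T^{-1/4})$ slack; this is a finite-dimensional convex program after discretizing $\beta$ on a value grid. Because both the objective and the constraint functionals are $1$-Lipschitz in $F_d$ with respect to $\|\cdot\|_\infty$, the per-round loss of $\hat\beta$ relative to the true optimum $\beta^*$ is $\tilde O(T^{-1/4})$, and a slightly shrunk copy of $\beta^*$ witnesses feasibility of the tightened program at the same accuracy. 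Summing, exploration contributes at most $T_0 = \tilde O(T^{3/4})$ regret and exploitation contributes $(T-T_0)\cdot \tilde O(T^{-1/4}) = \tilde O(T^{3/4})$.

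The hard part is preserving feasibility in \emph{every} round while still gathering useful samples. For the budget, we carve out $B_0 = O(T_0)$ upfront, which upper bounds any first-price payment during exploration; since the total budget scales as $\Theta(T)$, the carve-out is lower order. For the ROI constraint, during exploration we bid $b_i$ only in rounds where $b_i \le \gamma v_t$ and otherwise skip. Because $v \perp d$, this censoring does not bias the marginal of the observed $d$'s, and for any $b_i$ strictly below the essential supremum of $\gamma v$ a constant fraction of rounds remains usable, so the concentration of $\hat F_d$ still holds after rescaling $T_0$ by constants. Combining these ingredients yields a policy that satisfies both constraints throughout and attains the claimed $\tilde O(T^{3/4})$ regret against the best Lipschitz benchmark that satisfies the constraints.
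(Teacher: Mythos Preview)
Your proposal has a genuine gap. You claim that ``the Lipschitz/bounded-density assumption on $F_d$ inherited from \Cref{asmp:full}'' lets you interpolate between grid points to obtain $\|\hat F_d - F_d\|_\infty \le \tilde O(T^{-1/4})$. But \Cref{asmp:full} is a condition on the \emph{payment function} $p(b,d)$ --- that it is $1$-Lipschitz in the bid $b$ --- together with the existence of a safe bid; it places no restriction whatsoever on the distribution of $d$. The paper never assumes $F_d$ is Lipschitz or has bounded density. Without such an assumption, grid sampling cannot give uniform control of $F_d$: a jump of $F_d$ strictly between two adjacent grid points leaves $\|\hat F_d - F_d\|_\infty$ bounded away from zero regardless of sample size, and your exploitation-phase accuracy claim collapses.

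Beyond this, your scheme leans on independence of $v$ and $d$ and on knowing $F_v$, neither of which the formal \Cref{thm:bandit_ub} assumes (its proof in the appendix works for arbitrary joint $\calD$); and you optimize over a single $\hat\beta \in \calFlip$, whereas the benchmark \eqref{eq:pre:opt} is over $\Delta(\calF)$ and in general requires randomization to saturate the constraints. The paper's route is entirely different and avoids all of these issues: it stays inside the primal/dual framework, discretizes values into $N \approx L^{3/4}T^{1/4}$ buckets and bids into $K \approx (T/L)^{1/4}$ levels, and runs one instance of the interval-regret bandit algorithm of \Cref{thm:bandit:gen_algo} per value bucket. The $\tilde O(T^{3/4})$ rate arises from balancing the aggregate bandit regret $\sum_i \tilde O(\sqrt{T_i K}) \le \tilde O(\sqrt{NTK})$ against the discretization error $\tilde O(T(L/N + 1/K))$. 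Crucially, that discretization error is controlled by the $1$-Lipschitzness of $p(\cdot,d)$ together with the safe-bid device of \Cref{asmp:full}: rounding a bid up to the grid can only increase the win indicator, and whenever this would make the Lagrangian reward negative the algorithm substitutes the safe bid. This is precisely the mechanism that replaces the smoothness of $F_d$ your ETC argument needs but does not have.
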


We remark that our regret bounds are based on similar results that approximately satisfy the ROI constraint (i.e. have sublinear violation with high probability).
In \Cref{sec:tight_roi}, we present a black box reduction to turn such an algorithm into an algorithm that strictly satisfies the ROI constraint.

Finally, we note that the focus of our full information results in \cref{sec:full} is regret minimization. {The algorithms we present make use of results in \citet{DBLP:journals/corr/CastiglioniCK23}, which requires exponential running time.} In \cref{sec:poly}, we present algorithms that require polynomial time to run and offer the same guarantees as \cref{thm:full_in_intro} when the values and highest competing bids are independent across rounds.

\subsection{Related work}\label{sec:related}

%{\bf Bidding in non-truthful auctions}: 
The most relevant paper to ours is \citet{DBLP:journals/corr/CastiglioniCK23}.
The algorithm designed in our paper is based on the primal/dual framework in \cite{DBLP:journals/corr/CastiglioniCK23}; we briefly introduce the framework in \cref{sec:prelim}.
They also use the framework to design algorithms for bidding in first-price auctions with budget and ROI constraints, albeit only for a finite number of values and bids: their regret bound is $\tilde O(\sqrt{n m T})$ against the best bid per value, where $n$ is the number of values and $m$ is the number of bids.
In addition, their algorithm satisfies the ROI constraint only approximately. In contrast, our results apply to continuous distributions and strictly satisfy the ROI constraint.

{\bf Online bidding in non-truthful auctions}: \citet{DBLP:journals/corr/LucierPSZ23} extend the results of \citet{DBLP:conf/innovations/GaitondeLLLS23}.
Their main result is an algorithm for bidding in first and second price auctions under budget and ROI constraints, that implies welfare guarantees when used by every player.
In addition, they prove that their algorithm, when used in a stochastic environment, has $\tilde O(T^{7/8})$ regret against the class of pacing multipliers while satisfying both constraints strictly.
\citet{DBLP:conf/sigecom/FikiorisT23} also focus on welfare guarantees in first-price auctions when budgeted players use arbitrary algorithms that have no-regret against the class of pacing multipliers.
In addition, they design a full information algorithm that has $\tilde O(\sqrt T)$ regret with respect to the same class in the stochastic environment.
{Finally, \citet{DBLP:conf/icml/WangYDK23} study online learning in first-price auctions with budgets but focus only on the independent values and highest competing bids.}

{\bf Online bidding in truthful auctions}: \citet{DBLP:conf/www/FengPW23} study online bidding in sequential truthful auctions under budget and ROI constraints in a stochastic environment.
Their algorithm guarantees $\tilde O(\sqrt T)$ regret with respect to the best bidding sequence and satisfies exactly both the ROI and budget constraints.
Their results are an extension of the results of \citet{DBLP:conf/sigecom/BalseiroG17,DBLP:conf/icml/BalseiroLM20} where they study the same setting without ROI constraints. On the other hand, our paper studies a more general class of (possibly) non-truthful auctions and provides an algorithm that has the same regret guarantee. \citet{DBLP:conf/sigecom/BalseiroG17,DBLP:conf/icml/BalseiroLM20} also study the adversarial setting, where the value and the highest competing bid are not sampled by a stationary distribution but are picked by an adversary.
In this setting, it is impossible to achieve regret that is sublinear in $T$, so they bound the competitive ratio, i.e. the multiplicative error, instead.
Aside from truthful auctions, \cite{DBLP:conf/icml/BalseiroLM20} also extend these guarantees to settings where the learner gets to observe every parameter of a round before picking a decision, e.g., in auctions observe the highest competing bid before bidding.

{\bf Online bidding without constraints}: Another line of work studies online bidding without constraints.
%\gadelete{~\citet{DBLP:journals/corr/HanZFOW20,DBLP:conf/colt/Cesa-BianchiGGG17}.} 
\cite{DBLP:journals/corr/HanZFOW20} study quasi-linear utility maximization in first-price auctions, while \cite{DBLP:conf/colt/Cesa-BianchiGGG17} study the maximization of arbitrary Lipschitz continuous functions. Both use an algorithm with a tree structure similar to ours; we comment on the similarity/differences in \Cref{ssec:full:tree}.
%\gadelete{
%There are other works that study online learning without constraints.}
\citet{DBLP:conf/nips/BalseiroGMMS19} study contextual online learning, which result into a $\tilde O(T^{2/3})$ regret bound for quasi-linear utility maximizers in first-price auctions with bandit feedback.
\citet{DBLP:conf/focs/KleinbergL03} study online pricing, where the learner wants to learn how to price an item to maximize revenue; one of their results implies that the above regret bound is tight.
% An extension of the previous result is in \citet{DBLP:conf/aistats/GolrezaeiJLM23}, who study pricing algorithms when the buyer is budget and ROI constrained and is using a no-regret algorithm.\gacomment{is the last one needed here?}

{\bf Online learning with constraints}: \textit{Bandits with Knapsacks} is a class of online learning problems where the learner has a general action space and multiple budget constraints~\cite{DBLP:conf/focs/BadanidiyuruKS13,DBLP:conf/sigecom/AgrawalD14,DBLP:conf/nips/AgrawalD16,DBLP:conf/focs/ImmorlicaSSS19,DBLP:conf/colt/Kesselheim020,DBLP:conf/colt/FikiorisT23}.
\citet{DBLP:conf/colt/SlivkinsSF23,DBLP:conf/nips/KumarK22,DBLP:journals/corr/BernasconiCCF23} study the same setting when the budget can also increase in some rounds.

{\bf Online learning without constraints}: Finally, the problem of online learning without constraints has received extensive attention.
\cite{DBLP:journals/ftopt/Hazan16} and \cite{DBLP:journals/sigecom/Slivkins20} are excellent textbooks.
The most commonly used algorithms for settings with finite number of actions are Hedge for full information feedback \cite{DBLP:journals/jcss/FreundS97} and EXP3 for bandit feedback \cite{DBLP:journals/siamcomp/AuerCFS02}.
For online convex optimization (where there are infinite number of actions) the most commonly used algorithm is Online Gradient Descent~\cite{DBLP:conf/icml/Zinkevich03}.

\section{Deferred proof of Section \ref{sec:prelim}} \label{sec:app:prelim}

We know present the assumption that \cite{DBLP:journals/corr/CastiglioniCK23} make for their results.
They assume that there must exist a distribution $F \in \Delta(\calF)$ of functions such that if the player bids according to $F$ then for some $\a \ge 0$ on expectation:
(a) the payment of the player is no more than $\rho - \a$, and
(b) the value earned by the player is at least the payment plus $\a$.
Formally, $\exists F \in \Delta(\calF)$:
\begin{equation}\label{eq:pre:alpha}
\begin{split}
    \E_{f\sim F}\Ex[(v,d)]{
        \qty\Big( v - p\qty\big( f(v), d) ) \One{f(v) \ge d}
    }
    \ge
    \a
    \;\text{ and }\;
    \E_{f\sim F}\Ex[(v,d)]{
        p\qty\big( f(v), d ) \One{f(v) \ge d}
    }
    \le
    \rho - \a
\end{split}
\end{equation}

The above conditions imply that substituting $F$ in problem \eqref{eq:pre:opt}, both constraints are satisfied with a slack of $\a$.
In the absence of a ROI constraint, we would have $\a = \rho$.
Given \eqref{eq:pre:alpha} and that $\a > 0$, the regret bounds of \cite{DBLP:journals/corr/CastiglioniCK23} depend on $\a$ and become worse as $\a$ becomes smaller.

We now prove how \eqref{eq:pre:beta} implies \eqref{eq:pre:alpha} with $\a = \b\rho$.
We require that the function $f(v) = 0$ is included in $\calF$ and bidding $0$ guarantees $0$ payment.
Let $f_\b$ be the bidding function that makes the guarantee in \eqref{eq:pre:beta}.
We define $F$ to be the following distribution of functions: $f_\b$ with probability $\rho$ and the $0$ bid with probability $1 - \rho$.
In this case we have
\begin{equation*}
    \E_{f\sim F}\Ex[(v,d)]{
        \big( v - p\qty\big( f(v), d) ) \big) \One{f(v) \ge d}
    }
    \ge
    \rho \Ex[(v,d)]{
        \big( v - p\qty\big( f_\b(v), d) ) \big) \One{f_\b(v) \ge d}
    }
    \ge \rho \b
\end{equation*}
where the first inequality holds by $\One{0 \ge d} p(0, d) = 0$ and the second by \eqref{eq:pre:beta}. 
This proves the first inequality of \eqref{eq:pre:alpha}.
For the second constraint of \eqref{eq:pre:alpha} we have
\begin{alignat*}{3}
    \Line{
        \E_{f\sim F}\Ex[(v,d)]{
            p\qty\big( f(v), d) ) \One{f(v) \ge d}
        }
    }{=}{
        \rho \Ex[(v,d)]{
            p\qty\big( f_\b(v), d) ) \One{f_\b(v) \ge d}
        }
    }{}
    \\
    \Line{}{\le}{
        \rho ( 1 - \b )
    }{\text{by \eqref{eq:pre:beta} and } v \le 1}
\end{alignat*}
The above two inequalities prove \eqref{eq:pre:alpha} for $\alpha = \rho \b$, as claimed.

\section{Deferred Proofs and Text of Section \ref{sec:full}} \label{sec:app:full}

In this section we present the deferred proofs and text of \cref{sec:full}.

\subsection{Deferred proofs from Section \ref{ssec:full:good}} \label{ssec:app:full:good}

We first prove \cref{thm:full:good1} the theorem from \cref{ssec:full:good}.

\begin{proof}[Proof of \cref{thm:full:good1}]
    We first shift the rewards: since Hedge remains unchanged if a (possibly dependent on time) constant is added to the rewards, we set for all $t,a$
    \begin{equation*}
        r_t(a)
        \gets
        r_t(a) + \Delta U_t
        - \max_{a'\in [K]} r_t(a')
    \end{equation*}
    This means that now the rewards are $r_t:[K] \to [-(1-\Delta)U_t, \Delta U_t]$ and specifically for the good action, $r_t(g) \ge 0$.

    Now let $W_t = \sum_a \exp(\eta_t R_{t-1}(a))$ and $\theta = \sqrt{\frac{\log L}{T \Delta}}$.
    Notice that the probability to play action $a$ in round $t$ is $p_t(a) = \frac{\exp( \eta_t R_{t-1}(a) )}{W_t}$.
    We have that
    \begin{alignat}{3} \label{eq:full:1}
        \Line{
            \frac{1}{W_t} \sum_a \exp(\eta_t R_t(a))
        }{=}{
            \sum_a \frac{\exp(\eta_t R_{t-1}(a))}{W_t} \exp(\eta_t r_t(a))
        }{}
        \nonumber\\
        \Line{}{=}{
            \sum_a p_t(a) \exp(\eta_t r_t(a))
        }{}
        \nonumber\\
        \Line{}{\le}{
            \exp\left(
                \eta_t(1 - \eta_t U_t) \sum_a p_t(a) r_t(a)
                +
                \eta_t^2 U_t \Delta U_t
            \right)
        }{}
        \nonumber\\
        \Line{}{=}{
            \exp\left(
                \eta_t(1 - \theta) \sum_a p_t(a) r_t(a)
                +
                \theta^2 \Delta
            \right)
        }{\eta_t = \frac{\theta}{U_t}}
    \end{alignat}
    where in order to prove the last inequality we first prove the following proposition.
    
    \begin{proposition}
        Let $X$ be a random variable such that $\Ex{X} = x$, $c_1 \le X \le c_2$, with $c_2 \ge 0$. Then, for any $0 < \eta \le \frac{1}{\max\{|c_1|,|c_2|\}}$,
        \begin{equation*}
            \Ex{\exp(\eta X)}
            \le
            \exp\big(
                \eta x (1 - \eta (c_2 - c_1) ) + \eta^2 c_2 (c_2 - c_1)
            \big)
            .
        \end{equation*}
    \end{proposition}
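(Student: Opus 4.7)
My plan is to rewrite the moment generating function in terms of the non-negative variable $Y := c_2 - X$, which lies in $[0, c_2 - c_1]$ and has mean $\E[Y] = c_2 - x$. Then $\E[e^{\eta X}] = e^{\eta c_2} \cdot \E[e^{-\eta Y}]$, so it suffices to control $\E[e^{-\eta Y}]$ from above.

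I would then invoke the elementary inequality $e^{-z} \le 1 - z + \tfrac{z^2}{2}$, valid for all $z \ge 0$ (immediate because $z \mapsto z - 1 + e^{-z}$ vanishes at $0$ and has non-negative derivative on $[0,\infty)$). Applying this pointwise to $\eta Y \ge 0$ and combining it with the second-moment estimate $\E[Y^2] \le (c_2 - c_1)\,\E[Y] = (c_2-c_1)(c_2-x)$ (which holds because $0 \le Y \le c_2-c_1$), I obtain
\begin{equation*}
    \E[e^{-\eta Y}] \;\le\; 1 - \eta(c_2-x) + \tfrac{\eta^2}{2}(c_2-c_1)(c_2-x).
\end{equation*}
Feeding this into $1 + u \le e^u$ and multiplying by $e^{\eta c_2}$ yields
\begin{equation*}
    \E[e^{\eta X}] \;\le\; \exp\!\Big(\eta x + \tfrac{\eta^2}{2}(c_2-c_1)(c_2-x)\Big).
\end{equation*}

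To close the argument I would verify by direct expansion that the target exponent $\eta x(1 - \eta(c_2-c_1)) + \eta^2 c_2(c_2-c_1)$ equals $\eta x + \eta^2(c_2-c_1)(c_2-x)$, which is exactly twice the second-order term I already control; the stated inequality then follows immediately. I do not expect any substantive obstacle: the hypothesis $\eta \le 1/\max\{|c_1|,|c_2|\}$ in the proposition is not actually needed for this substitution-based argument, since the cubic-Taylor inequality for $e^{-z}$ holds for every $z \ge 0$. An alternative route via $e^{y} \le 1 + y + y^2$ for $|y| \le 1$, together with the pointwise bound $X^2 \le (c_1+c_2)X - c_1 c_2$ arising from $(X-c_1)(X-c_2) \le 0$, would genuinely use the hypothesis on $\eta$ but yields a slightly looser constant, which is why the substitution approach seems preferable.
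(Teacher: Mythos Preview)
Your argument is correct and in fact yields a slightly sharper bound than the one stated (your second-order term is $\tfrac{\eta^2}{2}(c_2-c_1)(c_2-x)$, half of the paper's). The paper takes a different route: it invokes a Bernstein-type moment-generating-function bound $\E[e^{\eta(X-x)}]\le\exp\bigl(\tfrac{\sigma^2}{c^2}(e^{\eta c}-1-\eta c)\bigr)$ with $c=\max\{|c_1|,|c_2|\}$, then uses the hypothesis $\eta c\le 1$ to replace $e^{\eta c}-1-\eta c$ by $(\eta c)^2$, arriving at $\exp(\eta x+\eta^2\sigma^2)$; the same variance bound $\sigma^2\le(c_2-c_1)(c_2-x)$ you use then finishes. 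So the paper genuinely consumes the hypothesis on $\eta$, whereas your substitution $Y=c_2-X$ together with the one-sided Taylor estimate $e^{-z}\le 1-z+z^2/2$ for $z\ge 0$ sidesteps it entirely and is more elementary (no named inequality needed). The trade-off is that the paper's route generalizes more readily to settings where one only has a variance bound rather than a pointwise upper bound $X\le c_2$, but for the proposition as stated your approach is cleaner and strictly stronger.
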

    
    \begin{proof}
        Let $\s^2 = \Ex{(X-x)^2}$ and $c = \max\{|c_1|,|c_2|\}$.
        We have that
        \begin{alignat*}{3}
            \Line{
                \Ex{\exp(\eta X)}
            }{=}{
                \Ex{\exp(\eta (X - x))} \exp(\eta x)
            }{}
            \\
            \Line{}{\le}{
                \exp\left(
                    \frac{\s^2}{c^2} \left( e^{\eta c} - 1 -\eta c \right)
                \right)
                \exp(\eta x)
            }{}
            \\
            \Line{}{\le}{
                \exp\left(
                    \frac{\s^2}{c^2} \eta^2 c^2
                \right)
                \exp(\eta x)
            }{\eta c \le 1 \implies e^{\eta c} \le 1 + \eta c + (\eta c)^2}
            \\
            \Line{}{=}{
                \exp\left(
                    \s^2 \eta^2
                \right)
                \exp(\eta x)
            }{}
        \end{alignat*}
        where the first inequality follows from Bernstein's inequality\footnote{See Lemma 7.26 in \url{https://www.stat.cmu.edu/~larry/=sml/Concentration.pdf}.}.
        We now bound
        \begin{equation*}
            \s^2
            =
            \Ex{(X-x)^2}
            \le
            \Ex{(c_2 - X)^2}
            \le
            (c_2 - c_1) \Ex{c_2 - X}
            =
            (c_2 - c_1) (c_2 - x)
        \end{equation*}
        where the first inequality follows from the fact that $\Ex{(X - y)^2}$ is minimized when $y = x = \Ex{X}$, i.e., when it is equal to the variance.
        The second inequality follows from $c_2 - X \ge 0$ and $X \ge c_1$.
        Rearranging proves the proposition.
    \end{proof}
    
    Now the inequality in \eqref{eq:full:1} follows from the proposition by setting $X = r_t(a)$ with probability $p_t(a)$, $c_1 = -(1-\Delta) U_t$, and $c_2 = \Delta U_t$ and noticing that $\eta_t = \frac{\theta}{U_t} \le \frac{1}{U_t} = \frac{1}{c_2 - c_1} \le \frac{1}{\max\{|c_1|,|c_2|\}}$.
    
    Taking the logarithm of \eqref{eq:full:1} we get
    \begin{equation*}
        \frac{1}{\eta_t} \log \left(
            \frac{\sum_a \exp(\eta_t R_t(a))}{\sum_a \exp(\eta_t R_{t-1}(a))}
        \right)
        \le
        (1 - \theta) \sum_a p_t(a) r_t(a)
        +
        U \theta \Delta
    \end{equation*}
    or equivalently
    \begin{equation} \label{eq:full:99}
        \frac{1}{\eta_t}
        \log \frac{\sum_a \exp(\eta_t R_t(a))}{K}
        -
        \frac{1}{\eta_t}
        \log \frac{\sum_a \exp(\eta_t R_{t-1}(a))}{K}
        \le
        (1 - \theta) \sum_a p_t(a) r_t(a)
        +
        U \theta \Delta
    \end{equation}
    
    In the above equation we use the fact that
    \begin{equation*}
        \frac{1}{\eta_t}
        \log \frac{\sum_a \exp(\eta_t R_t(a))}{K}
        \ge
        \frac{1}{\eta_{t+1}}
        \log \frac{\sum_a \exp(\eta_{t+1} R_t(a))}{K}
    \end{equation*}
    which follows from the fact that $\eta_{t+1} \ge \eta_t$ and the fact that the function
    \begin{equation*}
        \left(
            \frac{1}{K} \sum_{i=1}^K x_i^\eta
        \right)^{1/\eta}
    \end{equation*}
    is increasing in $\eta$ for $x_i > 0$. This makes the previous inequality
    \begin{equation*}
        \frac{1}{\eta_{t+1}}
        \log \frac{\sum_a \exp(\eta_{t+1} R_t(a))}{K}
        -
        \frac{1}{\eta_t}
        \log \frac{\sum_a \exp(\eta_t R_{t-1}(a))}{K}
        \le
        (1 - \theta) \sum_a p_t(a) r_t(a)
        +
        U \theta \Delta
    \end{equation*}
    
    Fix $\tau \in [T]$.
    We add the above for all $t \in [\tau-1]$ along with \eqref{eq:full:99} for $t=\tau$ and simplify the telescopic sum to get
    \begin{equation*}
        \frac{1}{\eta_\tau}
        \log \frac{\sum_a \exp(\eta_\tau R_\tau(a))}{K}
        -
        \frac{1}{\eta_1}
        \log \frac{\sum_a \exp(0)}{K}
        \le
        (1 - \theta) \sum_{t \in [\tau]}\sum_a p_t(a) r_t(a)
        +
        U \theta T \Delta
    \end{equation*}
    Using the fact that $\sum_a \exp(\eta_\tau R_\tau(a)) \ge \exp(\eta_\tau R_\tau^*)$ (where $R_\tau^* = \max_a R_\tau(a)$) and substituting $\eta_\tau = \frac{\theta}{U_\tau}$ and $\theta = \sqrt{\frac{\log K}{T \Delta}}$ we get
    \begin{equation}\label{eq:full:3}
        R_\tau^*
        -
        U_\tau \sqrt{T \Delta \log K}
        \le
        (1 - \theta) \sum_{t \in [\tau]}\sum_a p_t(a) r_t(a)
        +
        U_\tau \sqrt{T \Delta \log K}
    \end{equation}
    
    Using the fact that $R_\tau^* \ge 0$ (since the reward of the good arm is always non-negative) we can use \eqref{eq:full:3} to prove
    \begin{equation}\label{eq:full:4}
        \sum_{t \in [\tau]}\sum_a p_t(a) r_t(a)
        \ge
        -\frac{2}{1-\theta} U_\tau \sqrt{T \Delta \log K}
        \ge
        - 4 U_\tau \sqrt{T \Delta \log K}
    \end{equation}
    where we use the fact that $\theta = \sqrt{\frac{\log K}{T \Delta}} \le 1/2$ since $\Delta \ge \frac{4\log K}{T}$.
    We rearrange the terms in \eqref{eq:full:3} to get
    \begin{alignat*}{3}
        \Line{
            R_\tau^*
            -
            \sum_{t \in [\tau]}\sum_a p_t(a) r_t(a)
        }{\le}{
            2 U_\tau \sqrt{T \Delta \log K}
            -
            \theta \sum_{t \in [\tau]}\sum_a p_t(a) r_t(a)
        }{}
        \\
        \Line{}{\le}{
            2 U_\tau \sqrt{T \Delta \log K}
            +
            4 \theta U_\tau \sqrt{T \Delta \log K}
        }{\text{using \eqref{eq:full:4}}}
        \\
        \Line{}{\le}{
            4 U_\tau \sqrt{T \Delta \log K}
        }{\theta \le \frac{1}{2}}
    \end{alignat*}
    which proves the theorem.
\end{proof}

We now state and prove the high probability version of \cref{thm:full:good1}.

\begin{theorem}\label{thm:full:good2}
    In the same setting as \cref{thm:full:good1}, \cref{algo:full:good} guarantees the following high probability bound:
    for every $\d > 0$ probability at least $1 - \d$, it holds that
    \begin{equation*}
        \forall \tau \in [T] : \quad
        \max_{a\in[K]} \sum_{t \in [\tau]} r_t(a)
        -
        \sum_{t \in [\tau]} r_t(a_t)
        \le
        4 U_\tau \qty(
            \sqrt{T \Delta \log K}
            +
            \max\left\{
                \sqrt{T \Delta \log(T/\d)}
                ,
                \log(T/\d)
            \right\}
        )
    \end{equation*}
\end{theorem}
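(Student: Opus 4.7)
The plan is to decompose the sampled regret into the in-expectation regret, already controlled by \Cref{thm:full:good1}, plus a martingale fluctuation, and then control the fluctuation by a Bernstein-for-martingales (Freedman) inequality whose variance term is small precisely because there is a $\Delta$-good action. Letting $\bar r_t = \sum_a p_t(a) r_t(a)$, I write
\begin{equation*}
    \max_{a\in[K]}\sum_{t\le\tau} r_t(a) - \sum_{t\le\tau} r_t(a_t)
    \;=\;
    \Big(\max_{a\in[K]}\sum_{t\le\tau} r_t(a) - \sum_{t\le\tau} \bar r_t\Big)
    \;+\; \underbrace{\sum_{t\le\tau}\big(\bar r_t - r_t(a_t)\big)}_{M_\tau}.
\end{equation*}
\Cref{thm:full:good1} bounds the first parenthesis by $4U_\tau\sqrt{T\Delta\log K}$ with probability $1$, so only $M_\tau$ needs to be controlled in high probability.

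The heart of the argument is a deterministic bound $V_\tau \le O(T\Delta U_\tau^2)$ on the conditional variance
\begin{equation*}
    V_\tau \;:=\; \sum_{t\le\tau}\operatorname{Var}\big[r_t(a_t)\,\big|\,\mathcal{F}_{t-1}\big]
    \;\le\; \sum_{t\le\tau}\mathbb{E}_{a\sim p_t}\big[(r_t(a)-r_t(g))^2\big]
\end{equation*}
(variance is minimized at the mean). Since $|r_t(a)-r_t(g)|\le U_t\le U_\tau$, each summand is bounded by $U_\tau|r_t(a)-r_t(g)|$, and the absolute value splits into $(r_t(a)-r_t(g))^+\le \Delta U_t$ (by the $\Delta$-good property) and $(r_t(g)-r_t(a))^+\le r_t(g)-r_t(a)+\Delta U_t$ (since $r_t(g)-r_t(a)\ge -\Delta U_t$, which implies $x^+\le x+\Delta U_t$). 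Taking expectation over $a\sim p_t$ and summing,
\begin{equation*}
    \sum_{t\le\tau}\mathbb{E}\big[|r_t(a_t)-r_t(g)|\,\big|\,\mathcal{F}_{t-1}\big]
    \;\le\; \Big(\sum_{t\le\tau} r_t(g) - \sum_{t\le\tau}\bar r_t\Big) + 2T\Delta U_\tau
    \;\le\; 4U_\tau\sqrt{T\Delta\log K} + 2T\Delta U_\tau,
\end{equation*}
where the last step uses $\sum_t r_t(g)\le\max_a\sum_t r_t(a)$ together with \Cref{thm:full:good1}. Under the assumption $\Delta\ge 4\log K/T$ we have $\sqrt{T\Delta\log K}\le T\Delta/2$, so the whole quantity is $O(T\Delta U_\tau)$ and hence $V_\tau\le O(T\Delta U_\tau^2)$ deterministically.

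With this variance estimate in hand, Freedman's inequality applied to the bounded martingale $M_\tau$ (per-step range at most $U_\tau$) yields, for each fixed $\tau$, with probability at least $1-\delta/T$,
\begin{equation*}
    M_\tau \;\le\; \sqrt{2V_\tau\log(T/\delta)} + \tfrac{U_\tau}{3}\log(T/\delta)
    \;\le\; O\!\Big(U_\tau\sqrt{T\Delta\log(T/\delta)} + U_\tau\log(T/\delta)\Big).
\end{equation*}
A union bound over $\tau\in[T]$, combined with the expected-regret bound from \Cref{thm:full:good1}, yields exactly the stated regret. The main obstacle in this plan is the variance bound: without coupling $V_\tau$ back to the expected-regret estimate of \Cref{thm:full:good1}, one can only get the trivial $V_\tau\le T U_\tau^2$, which recovers Azuma--Hoeffding and loses the crucial $\sqrt{\Delta}$ factor; once the variance bound is in place, the rest is a routine Freedman-plus-union-bound argument.
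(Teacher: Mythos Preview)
Your proposal is correct and follows essentially the same route as the paper: decompose into the expected-regret term (handled by \Cref{thm:full:good1}) plus a martingale, bound the predictable quadratic variation by $O(T\Delta U_\tau^2)$ using the $\Delta$-good property together with the expected-regret bound, apply Freedman's inequality, and union bound over $\tau$. The only cosmetic difference is in the variance step: the paper first shifts rewards so that $r_t(a)\in[-(1-\Delta)U_t,\Delta U_t]$ and centers at the upper endpoint $\Delta U_t$, whereas you center directly at $r_t(g)$ and split $|r_t(a)-r_t(g)|$ into its positive and negative parts; both derivations land at the same $V_\tau=O(T\Delta U_\tau^2)$ bound for the same reason.
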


The high probability bound does not follow from a simple application of the Azuma-Heoffding inequality, since the range of $\sum_t r_t(a_t)$ can be $\Omega(U T)$ making the resulting error $\order*{U\sqrt{T \log(1/\d)}}$ and not $\order*{\sqrt{T \Delta \log(1/\d)}}$ like in the above.
Instead, we use Freedman's inequality, which offers a bound based on $\sum_t \text{Var}[r_t(a_t)]$ which we prove is $\order*{U^2 T \Delta}$.
This allows us to get the improved dependence on $\Delta$.

\begin{proof}[Proof of \cref{thm:full:good2}]
    For every $t$, let $X_t = \sum_a p_t(a) r_t(a)$ and $Y_t = \sum_{\tau = 1}^t (X_\tau - r_\tau(a_\tau))$.
    The theorem follows by showing that for every $\d > 0$
    \begin{equation*}
        \Pr{
            \forall\tau \in [T] :
            Y_\tau \le
            4 U_\tau \max\left\{
            \sqrt{T \Delta \log(T/\d)}
            ,
            \log(T/\d)
        \right\}
        }
        \ge 1 - \d
        .
    \end{equation*}
    
    We are going to use Freedman's inequality \cite[Theorem 1.6]{freedman1975tail} on the sequence $Y_0, Y_1, \ldots$ which we first prove is a martingale with respect to the the history of the rounds (we denote with $\Es[t]{\cdot}$ the expectation conditioned on the history of the rounds up to $t$, i.e., the actions that the player and the adversary has take up to $t$): for every $t \ge 1$
    \begin{equation*}
        \Exs[t-1]{Y_t - Y_{t-1}}
        =
        \Exs[t-1]{X_t - r_t(a_t)}
        =
        0
    \end{equation*}
    where the last inequality holds because $a_t = a$ with probability $p_t(a)$.
    This proves the martingale property.
    We now notice that $|Y_t - Y_{t-1}| \le U_t$ and that
    \begin{alignat*}{3}
        \Line{
            \Exs[t-1]{(Y_t - Y_{t-1})^2}
        }{=}{
            \Exs[t-1]{(X_t - r_t(a_t))^2}
        }{}
        \\
        \Line{}{\le}{
            \Exs[t-1]{( \Delta U_t - r_t(a_t) )^2}
        }{\Exs[t-1]{r_t(a_t)} = X_t}
        \\
        \Line{}{\le}{
            U_t \Exs[t-1]{\Delta U_t - r_t(a_t) }
            =
            \Delta U_t^2 - U_t X_t
        }{-(1-\Delta)U_t \le  r_t(a) \le \Delta U_t}
    \end{alignat*}
    where we notice that in the first inequality we use the fact that $\Exs[t-1]{(X_t - r_t(a_t))^2}$ is the conditional variance of $r_t(a_t)$, which means that $\Exs[t-1]{(c - r_t(a_t))^2}$ is minimized when $c = \Exs[t-1]{r_t(a_t)} = X_t$. We now have that
    \begin{alignat*}{3}
        \Line{
            \sum_{t \in [\tau]} \Exs[t-1]{(Y_t - Y_{t-1})^2}
        }{\le}{
            \sum_{t \in [\tau]} ( \Delta U_t^2 - U_t X_t )
            =
            U_\tau^2 T  \Delta - U_\tau \sum_{t \in [\tau]} X_t
        }{}
        \\
        \Line{}{\le}{
            U_\tau^2 T  \Delta - U_\tau \left(
                \max_{a\in[K]} \sum_{t = 1}^T r_t(a)
                - 4 U_\tau \sqrt{T \Delta \log K}
            \right)
        }{\text{\cref{thm:full:good1}}}
        \\
        \Line{}{\le}{
            U_\tau^2 T \Delta
            + 4 U_\tau^2 \sqrt{T \Delta \log K}
        }{r_t(g) \ge 0}
        \\
        \Line{}{\le}{
            3 U_\tau^2 T \Delta
        }{\log K \le \frac{T \Delta}{4}}
    \end{alignat*}
    
    Now using Freedman's inequality gives us that for all $\e > 0$
    \begin{equation*}
        \Pr{Y_\tau < \e}
        \ge
        1 - \exp\left(
            - \frac{\e^2 / 2}{3 U_\tau^2 T \Delta + U_\tau \e /3}
        \right)
    \end{equation*}
    
    Let $\d > 0$ such that
    \begin{equation}\label{eq:full:5}
        \e = U_\tau \max\left\{
            \sqrt{12} \sqrt{T \Delta \log(1/\d)}
            ,
            \frac{4}{3} \log(1/\d)
        \right\}
        .
    \end{equation}
    This and a union bound over all $\tau$ proves the lemma as long as we prove that
    \begin{equation*}
        \frac{\e^2 / 2}{3 U_\tau^2 T \Delta + U_\tau \e /3}
        \ge
        \log(1/\d)
    \end{equation*}
    or equivalently
    \begin{equation*}
        \e^2
        \ge
        6 U_\tau^2 T \Delta \log(1/\d) + \frac{2}{3} U_\tau \e \log(1/\d)
    \end{equation*}
    
    The above inequality is true because, by the definition of $\d$ in \eqref{eq:full:5},
    \begin{equation*}
        \e^2
        \ge 
        12 U_\tau^2 T \Delta \log(1/\d)
        \quad
        \text{ and }
        \quad
        \e
        \ge 
        \frac{4}{3} U_\tau \log(1/\d)
    \end{equation*}
    Multiplying the second inequality with $\e$ and adding them gives us the desired bound on $\e^2$.
\end{proof}

\subsection{Deferred proof of Section \ref{ssec:full:tree}} \label{ssec:app:full:tree}

We now prove \cref{thm:full:tree}.

\begin{proof}[Proof of \cref{thm:full:tree}]
    We first make the observation that \cref{algo:full:tree} is well defined: in order to calculate $b_t^{f_M}$ for every $f_M \in \calF_M$ we only need knowledge of $v_t,\chi_t,\psi_t$ and not $d_t$, as explained in \cref{asmp:full}.
    We also note that the rewards $\tilde r_t(\cdot)$ that are fed into each $\calA(f_i)$ are in the range $[0, U_t]$ and $U_1 \le U_2 \le \ldots$, as needed for the guarantee of \cref{thm:full:good1}.
    The lower bound for the rewards comes from the fact that every bid used in round $t$ is $b_t^{f_M}$ for some $f_M \in \calF_M$, which because of \eqref{eq:full:good} and \cref{asmp:full} guarantees non-negative reward.
    The upper bound follows from the definition of $r_t(\cdot)$ and the fact that values and bids are in $[0, 1]$.
    
    We now show that for every $i < M$ and $f_i \in \calF_i$, the good bid, $g_t^{f_i}$, is $2^{-i+3}$-good with respect to the bids used by $f_i$, i.e. $\big\{ b_t^{f_M} \big\}_{f_M \in \calL(f_i)}$.
    We prove that
    \begin{equation}\label{eq:full:15}
        r_t\left( g_t^{f_i} \right)
        \ge
        r_t\left( b_t^{f_M} \right)
        -
        2^{-i+3} U_t,
        \quad
        \forall f_M \in \calL(f_i)
    \end{equation}
    which implies $\tilde r_t\big( g_t^{f_i} \big) \ge \tilde r_t\big( f_{i+1} \big) - 2^{-i+3} U_t$ for all $f_{i+1} \in \calC(f_i)$, since $\tilde r_t\big( g_t^{f_i} \big) = r_t\big( g_t^{f_i} \big)$ and $\tilde r_t\big( f_{i+1} \big)$ is a convex combination of $\{ r_t\big( b_t^{f_M} \big) \}_{f_M \in \calL(f_i)}$.
    
    Fix $f_M \in \calL(f_i)$. We distinguish two cases to prove \eqref{eq:full:15} for this $f_M$:
    \begin{itemize}
        \item If $b_t^{f_M}$ loses the auction in round $t$ ($b_t^{f_M} < d_t$), then $r_t(b_t^{f_M}) = 0$ and \eqref{eq:full:15} follows from $r_t( g_t^{f_i} ) \ge 0$.
        
        \item If $b_t^{f_M}$ wins the auction in round $t$, then $g_t^{f_i} \ge b_t^{f_M}$ and therefore $g_t^{f_i}$ also wins the auction in round $t$.
        This means that $\One{b_t^{f_M} \ge d_t} = \One{g_t^{f_i} \ge d_t} = 1$ and so, in order to prove \eqref{eq:full:15} we have to prove that the payment of $g_t^{f_i}$ is not more than the payment of $b_t^{f_M}$ plus $2^{-i+3}$.
        The last statement follows from the Lipschitzness of $p(\cdot, d_t)$ (\cref{asmp:full}) and the fact that $| g_t^{f_i} - b_t^{f_M} | \le 2^{-i+3}$ which follows by the following:
        For every $f_M, f_M' \in \calL(f_i)$ it holds that $\norm{f_M - f_M'}_\infty \le 2^{-i+3}$ since
        \begin{equation*}
            \norm{ f_i - f_M }_\infty
            \le
            \sum_{j=0}^{M-i-1} \norm{ P^{j+1}(f_M) - P^j(f_M)  }_\infty
            \le
            \sum_{j=0}^{M-i-1} 2^{-M+j+2}
            \le
            2^{-i+2}
        \end{equation*}
        where $P^j(\cdot)$ is the application of the parent function $P$ $j$ times, the first inequality uses the triangle inequality, and the second uses the definition of the the parent function $P$.
        The fact that $\norm{f_M - f_M'}_\infty \le 2^{-i+3}$ follows from the above using the triangle inequality and the fact that $| g_t^{f_i} - b_t^{f_M} | \le 2^{-i+3}$ follows from the fact that $g_t^{f_i} = b_t^{f_M'}$ for some $f_M' \in \calL(f_i)$.
    \end{itemize}
    
    Now we summarize the setting of each algorithm $\calA(f_i)$, for $f_i \in \calF_i, i<M$:
    \begin{itemize}
        % \item A distribution $p_t^{f_t}(\cdot)$ is maintained over $\calC(f_i) \cup \{ g_t^{f_i} \}$.
        % \item The distribution $p_t^{f_t}(\cdot)$ is calculated using the algorithm of \cref{thm:full:good1} and the rewards $\tilde r_t(\cdot)$.
        \item The reward range is $[0, U_t]$ in round $t$, where $U_t = \max\{\chi_t, \psi_t\}$.
        \item In every round there is an action that is $\Delta_i$-good, where $\Delta_i := 2^{-i+3}$.
        \item There are at most $K_i$ actions, where $K_i := \exp(\Clip L 2^{i+1}) + 1 \le \exp(\Clip L 2^{i+2})$ where last inequality holds because $\Clip L \ge 1$.
    \end{itemize}
    
    Let $\tilde R_T(f_i) = \sum_{t=1}^T \tilde r_t(f_i)$ denote the total reward of algorithm $\calA(f_i)$ and similarly define $\tilde R_T(g^{f_i})$ the total reward of the good bids of $\calA(f_i)$.
    Because of the guarantee of each algorithm (\cref{thm:full:good1}) we have that with probability $1$:
    \begin{equation}\label{eq:full:11}
        \max_{f_{i+1} \in \calC(f_i) \cup \{ g^{f_i} \}} \tilde R_T(f_{i+1})
        -
        \sum_{f_{i+1} \in \calC(f_i) \cup \{ g^{f_i} \}} p_t^{f_i}(f_{i+1}) \tilde R_T(f_{i+1})
        \le
        4 U \sqrt{T \Delta_i \log K_i}
        % +
        % 4 U \sqrt{T \Delta_i \log(1/\d_i)}
        % +
        % 4 U \log(1/\d_i)
        \le
        23 U \sqrt{\Clip L T}
    \end{equation}
    
    % We now notice that, using the union bound on all the algorithms, with probability at least $1 - \sum_{i=0}^{M-1} \d_i \exp(\Clip L 2^i)$ the above holds for all $i,f_i$. For any $\d >0$, picking $\d_i = \frac{\d}{M \exp(\Clip L 2^i)}$ we get that with probability $1 - \d$ for all algorithms it holds
    % \begin{alignat}{3}\label{eq:full:11}
    %     \Line{
    %         \max_{f_{i+1}\in\calC(f_i)} R(f_{i+1})
    %         -
    %         R(f_i)
    %     }{\le}{
    %         16 U  \sqrt{T \Clip L}
    %         +
    %         8 U \sqrt{T (2^{-i} \log\frac{M}{\d} + \Clip L)}
    %         +
    %         4 U (\log\frac{M}{\d} + \Clip L 2^i)
    %     }{}
    %     \nonumber\\
    %     \Line{}{\le}{
    %         24 U \sqrt{T \Clip L}
    %         +
    %         8 U \sqrt{T 2^{-i} \log\frac{M}{\d}}
    %         +
    %         4 U \log\frac{M}{\d}
    %         +
    %         4 U \Clip L 2^i
    %     }{}
    % \end{alignat}
    
    We now bound the error because we bid according to the bidding functions $\calF_M$ and not $\calF$.
    For any $f \in \calF$ let $f_M \in \calF_M$ be such that $f_M \ge f$ and $\Vert f - f_M \Vert_\infty \le 2^{-M}$.
    For every round $t$ we have
    \begin{alignat*}{3}
        \Line{
            r_t(f)
        }{=}{
            \One{f(v_t) \ge d_t}\Big( \chi_t v_t - \psi_t p\big( f(v_t), d_t \big) \Big)
        }{}
        \\
        \Line{}{\le}{
            \One{f_M(v_t) \ge d_t}\Big( \chi_t v_t - \psi_t p\big( f(v_t), d_t \big) \Big)^+
        }{f_M \ge f}
        \\
        \Line{}{\le}{
            \One{f_M(v_t) \ge d_t}\Big( \chi_t v_t - \psi_t p\big( f_M(v_t), d_t \big) \Big)^+
            +
            \psi_t f_M(v_t) - \psi_t f(v_t) 
        }{p(\cdot,d_t): 1\text{-Lipschitz}}
        \\
        \Line{}{\le}{
            \One{f_M(v_t) \ge d_t}\Big( \chi_t v_t - \psi_t p\big( f_M(v_t), d_t \big) \Big)^+
            +
            U_t 2^{-M}
        }{\Vert f - f_M \Vert_\infty \le 2^{-M}}
        \\
        \Line{}{=}{
            \big( r_t(f_M) \big)^+
            +
            U_t 2^{-M}
            \le
            \tilde r_t(f_M)
            +
            U_t 2^{-M}
        }{\tilde r_t(f_M) \ge ( r_t(f_M) )^+}
    \end{alignat*}
    where recall that $\tilde r_t(f_M) = r_t(b_t^{f_M})$.
    
    The above implies
    \begin{equation}\label{eq:full:12}
        % \opt
        % =
        \sup_{f\in\calF} \sum_{t = 1}^T r_t(f)
        \le
        \max_{f_M\in\calF_M} \tilde R_t(f_M)
        + U T 2^{-M}
    \end{equation}
    
    Let $f_M^*$ be a maximizer of the r.h.s. in the inequality above and for every $i < M$, let $f_i^*$ be the ancestor of $f_M^*$ in the $i$-th level.
    Using this notation we prove
    \begin{alignat*}{3}
        \Line{
            \sup_{f\in\calF} \sum_{t = 1}^T r_t(f)
            -
            \tilde R_T(f_0)
        }{\le}{
            U T 2^{-M}
            +
            \tilde R_T(f_M^*)
            -
            \tilde R_t(f_0)
        }{\text{by \eqref{eq:full:12}}}
        \\
        \Line{}{=}{
            U T 2^{-M}
            +
            \sum_{j=0}^{M-1}\left(
                \tilde R_T(f_{j+1}^*)
                -
                \tilde R_T(f_j^*)
            \right)
        }{}
        \\
        \Line{}{\le}{
            U T 2^{-M}
            +
            \sum_{j=0}^{M-1} 23 U \sqrt{\Clip L T}
        }{\text{by \eqref{eq:full:11}}}
        \\
        \Line{}{=}{
            U T 2^{-M}
            +
            23 U M \sqrt{\Clip L T}
        }{}
    \end{alignat*}
    Picking $M = \lfloor \log_2\sqrt T \rfloor$ the above becomes
    \begin{equation*}
        \sup_{f\in\calF} \sum_{t = 1}^T r_t(f)
        -
        \tilde R_T(f_0)
        \le
        2 U \sqrt{T}
        +
        \frac{23}{2 \log 2} U \sqrt{\Clip L T} \log T
    \end{equation*}
    The above is the claimed regret bound since $\tilde r_t(f_0) = \sum_b q_t^{f_0} r_t(b)$.
\end{proof}

\subsection{Deferred proof and algorithm from Section \ref{ssec:full:interval}} \label{ssec:app:full:interval}

We first present the reduction from standard no-regret to no interval regret.

\begin{algorithm}[t]
% \SetAlgoNoLine
\DontPrintSemicolon
\caption{Reduction from regret to interval regret}
\label{algo:full:interval}
\KwIn{Number of rounds $T$, action space $A$, algorithms $\big\{ \calA_{\tau_1} \big\}_{\tau_1 \in [T]}$ over action space $A$}

Initialize an instance $\calA$ of \cref{algo:full:good} with $\Delta = 1$ and $K = T$\;

\For{$t \in [T]$}
{
    Receive reward range $[0, U_t]$\;
    
    \For{$\tau_1 \le t$}{
        Pass $U_t$ to $\calA_{\tau_1}$ and
        receive $q_t^{\tau_1}(\cdot)$
        \tcp*[f]{Distribution over actions of $\calA_{\tau_1}$}
    }
    
    Pass $U_t$ to $\calA$ and receive $p_t(\cdot)$
    \tcp*[f]{Distribution over algorithms of $\calA$}
    
    Calculate for every action $a$: $q_t(a) = \frac{\sum_{\tau_1\le t} p_t(\tau_1) q_t^{\tau_1}(a)}{\sum_{\tau_1 \le t} p_t(\tau_1)}$
    \tcp*[f]{Distribution of actions by sampling an algorithm $\calA_{\tau_1}, t \le \tau_1$ and then an action from $q_t^{\tau_1}(\cdot)$}
    
    Sample and output $a_t \sim q_t(\cdot)$\;
    
    Receive function $r_t : A \to [0, U_t]$\;
    
    Pass $r_t(\cdot)$ to $\calA_{\tau_1}$ for $\tau_1 \le t$
    \tcp*[f]{$\calA_{\tau_1}$ internal update}
    
    Calculate $\tilde r_t(\tau_1) = \Exs[a \sim q_t^{\tau_1}]{r_t(a)}$ for $\tau_1 \le t$
    \tcp*[f]{Expected reward of $q_t^{\tau_1}(\cdot)$}
    
    Calculate $\tilde r_t^\emptyset = \Exs[a \sim q_t]{r_t(a)}$
    \tcp*[f]{Expected reward of $q_t(\cdot)$}
    
    Update $\calA$ with reward $\tilde r_t(\tau_1)$ for $\tau_1 \le t$
    \tcp*[f]{$\calA$ update for active algorithms}
    
    \hspace{8pt} and with reward $\tilde r_t^\emptyset$ for $\tau_1 \le t$
    \tcp*[f]{$\calA$ update for inactive algorithms}
}
\end{algorithm}

We now prove \cref{thm:full:interval}.

\begin{proof}[Proof of \cref{thm:full:interval}]
    We first extend the definition of $\tilde r_t(\cdot)$ for inactive algorithms. This makes
    \begin{equation*}
        \tilde r_t(\tau_1)
        =
        \begin{cases}
            \Exs[a \sim q_t^{\tau_1}]{ r_t(a) }, & \text{ if } \tau_1 \le t
            \\
            \tilde r_t^\emptyset, & \text{ if } \tau_1 > t
        \end{cases}
        .
    \end{equation*}
    
    Now we re-write $\tilde r_t^\emptyset$, the expected reward if the action is sampled according to $q_t(\cdot)$:
    \begin{equation}\label{eq:full:52}
        \tilde r_t^\emptyset
        =
        \Exs[a \sim q_t]{r_t(a)}
        =
        \frac{1}{
            \sum_{\tau_1 \le t} p_t(\tau_1)
        }
        \sum_{\tau_1 \le t} \qty\big(
            p_t(\tau_1) \tilde r_t(\tau_1)
        )
        .
    \end{equation}
    
    We notice that $\tilde r_t(\cdot)$ has the same reward range as $r_t(\cdot)$.
    \cref{thm:full:good1} for algorithm $\calA$ gives us a regret guarantee by every round $\tau_2$:
    \begin{equation*}
        \forall \tau_2 \in [T] : \quad
        \max_{\tau_1 \in [T]} \sum_{t\in[\tau_2]} \tilde r_t(\tau_1)
        -
        \sum_{t\in[\tau_2]} \sum_{\tau_1 \in [T]} p_t(\tau_1) \tilde r_t(\tau_1)
        \le
        4 U_{\tau_2} \sqrt{T \log T}
    \end{equation*}
    which implies
    \begin{equation}\label{eq:full:31}
        \forall [\tau_1, \tau_2] \sub [T] : \quad
        \sum_{t\in[\tau_2]} \tilde r_t(\tau_1)
        -
        \sum_{t\in[\tau_2]} \sum_{\tau_1 \in [T]} p_t(\tau_1) \tilde r_t(\tau_1)
        \le
        4 U_{\tau_2} \sqrt{T \log T}
    \end{equation}
    
    We are going to show that \eqref{eq:full:31} implies our theorem. First we prove that for every round $t$,
    \begin{alignat*}{3}
        \Line{
            \sum_{\tau_1 \in [T]} p_t(\tau_1) \tilde r_t(\tau_1)
        }{=}{
            \sum_{\tau_1 \le t} p_t(\tau_1) \tilde r_t(\tau_1)
            +
            \sum_{\tau_1 > t} p_t(\tau_1) \tilde r_t^\emptyset
        }{}
        \\
        \Line{}{=}{
            \tilde r_t^\emptyset \sum_{\tau_1 \le t} p_t(\tau_1)
            +
            \tilde r_t^\emptyset \sum_{\tau_1 > t} p_t(\tau_1)
        }{\text{by \eqref{eq:full:52}}}
        \\
        \Line{}{=}{
            \tilde r_t^\emptyset
        }{}
    \end{alignat*}
    
    The above and the fact that $\tilde r_t(\tau_1) = \tilde r_t^\emptyset$ for $\tau_1 > t$ makes \eqref{eq:full:31}:
    \begin{equation*}
        \forall [\tau_1, \tau_2] \sub [T] : \quad
        \sum_{t < \tau_1 } \tilde r_t(\emptyset)
        +
        \sum_{t \in [\tau_1, \tau_2]} \tilde r_t(\tau_1)
        -
        \sum_{t \in [\tau_2]} \tilde r_t(\emptyset)
        \le
        4 U_{\tau_2} \sqrt{T \log T}
    \end{equation*}
    or equivalently
    \begin{equation*}
        \forall [\tau_1, \tau_2] \sub [T] : \quad
        \sum_{t \in [\tau_1, \tau_2]} \tilde r_t(\tau_1)
        -
        \sum_{t \in [\tau_1, \tau_2]} \tilde r_t(\emptyset)
        \le
        4 U_{\tau_2} \sqrt{T \log T}
        .
    \end{equation*}
    
    Given the regret bound of each algorithm $\calA_{\tau_1}$ by round $\tau_2$ the above implies
    \begin{equation*}
        \forall [\tau_1, \tau_2] \sub [T] : \quad
        \sup_a \sum_{t \in [\tau_1, \tau_2]} r_t(a)
        -
        \sum_{t \in [\tau_1, \tau_2]} \tilde r_t(\emptyset)
        \le
        4 U_{\tau_2} \sqrt{T \log T}
        +
        \reg_{\tau_1}(\tau_2)
        .
    \end{equation*}
    which is the desired regret bound.
\end{proof}

Using a simple concentration inequality and the union bound, we prove \cref{thm:full:primal} from \cref{cor:full:primal_expectation}.

\begin{proof}[Proof of \cref{thm:full:primal}]
    Fix $1 \le \tau_1 < \tau_2 \le T$.
    For $t\in[\tau_1, \tau_2]$, define $X_t = r_t(b_t) - \sum_b q_t(b) r_t(b)$ and $M_t = \sum_{t'\in[\tau_1, t]} X_t$.
    We notice that the sequence $M_t$ is a martingale with respect to the history of the previous rounds $\mathcal H_{t-1}$ (player's and adversary's decisions): for every $t$
    \begin{equation*}
        \ExC{M_t - M{t-1}}{\mathcal H_{t-1}}
        =
        \ExC{X_t}{\mathcal H_{t-1}}
        =
        0
    \end{equation*}
    
    In addition we notice that $|X_t| \le U_t$ since $r_t(b) \in [0, U_t]$.
    This allows us to use Azuma's inequality, proving that for every $\d \in [0,1]$, with probability at least $1 - \d$ it holds
    \begin{equation*}
        M_{\tau_2}
        \ge
        - U_{\tau_2} \sqrt{2 (\tau_2 - \tau_1) \log(1/\d)}
    \end{equation*}
    which implies that with probability at least $1 - \d$
    \begin{equation*}
        \sum_{t = \tau_1}^{\tau_2} r_t(b_t)
        -
        \sum_{t = \tau_1}^{\tau_2} \sum_b q_t(b) r_t(b)
        \ge
        - U_{\tau_2} \sqrt{2 T \log(1/\d)}
    \end{equation*}
    
    Using the union bound over all $1 \le \tau_1 < \tau_2 \le T$ we get that for every $\d \in [0,1]$ with probability at least $1 - \d$ it holds that for all $1 \le \tau_1 < \tau_2 \le T$
    \begin{equation*}
        \sum_{t = \tau_1}^{\tau_2} r_t(b_t)
        -
        \sum_{t = \tau_1}^{\tau_2} \sum_b q_t(b) r_t(b)
        \ge
        - U_{\tau_2} \sqrt{2 T \log\frac{\binom{T}{2}}{\d}}
    \end{equation*}
    
    Using \cref{cor:full:primal_expectation} we get the theorem.
\end{proof}
\section{Exact satisfaction of the ROI constraint}\label{sec:tight_roi}

In this section we show how we can turn every algorithm that has an approximate satisfaction of the ROI constraint into one which has an exact satisfaction.
This reduction (\cref{lem:tight:reduction}), together with \cref{thm:full:main}, will lead to the following theorem.

\begin{theorem}\label{thm:tight:main_tight}
    In the same setting as \cref{thm:full:main}, there exists an algorithm that satisfies the budget and ROI constraints with probability $1$ and for every $\d > 0$, with probability $1 - \d$ has regret
    \begin{equation*}
        \frac{1}{\b^2 \rho}\order{
            \sqrt{L T} \log T
            +
            \sqrt{T \log(1/\d)}
        }
    \end{equation*}
    as long as $\b = \Omega\qty\big( T^{-1/2 + \e}( \sqrt{L} + \sqrt{\log(1/d)}) )$ for some constant $\e > 0$.
\end{theorem}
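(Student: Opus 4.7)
The proof of \Cref{thm:tight:main_tight} combines \Cref{thm:full:main} with the reduction of \Cref{lem:tight:reduction}, which converts an algorithm with approximate ROI satisfaction into one that satisfies the ROI constraint exactly with probability $1$. The plan has three parts.

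First, I would instantiate the reduction by running the base algorithm of \Cref{thm:full:main} on a modified instance with a stricter ROI target $\g' = 1 + \a$ for some slack $\a > 0$ to be chosen, so that approximate satisfaction of the stricter constraint implies exact satisfaction of the original one. To guarantee exact satisfaction deterministically (not just on the high-probability event), I would also impose a safeguard that tracks the running surplus $S_t = \sum_{\tau \le t} x_\tau(v_\tau - p_\tau)$ and forbids any bid whose worst-case payment could drive $S_t$ below $0$; under \Cref{asmp:full}, the payment satisfies $p(b_t,d_t) \le b_t$ on winning rounds, so capping the bid at $v_t + S_{t-1}$ suffices.

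Second, I plug in \Cref{thm:full:main}: with probability $1-\d$, the base algorithm on the modified instance has regret and ROI violation at most $V = \tilde O\big(\tfrac{1}{\rho \b'}(\sqrt{LT} + \sqrt{T\log(1/\d)})\big)$ against the benchmark $\opt'$ of the tightened problem, where $\b'$ is the slack in \eqref{eq:pre:beta} for target $\g'$. Picking $\a \le \b/2$ guarantees that the $f$ witnessing \eqref{eq:pre:beta} for the original instance still witnesses it for $\g'$ with $\b' \ge \b/2$. Setting $\a = \Theta(V/(\rho T))$ then ensures $\a \cdot \sum_t x_t p_t \ge V$ on the high-probability event, so the original ROI is strictly satisfied; on the low-probability complement, the deterministic safeguard preserves $S_T \ge 0$.

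Third, the regret against the original $\opt$ decomposes as the base algorithm's regret against $\opt'$ plus the gap $\opt - \opt'$. The latter admits a sensitivity bound of the form $\opt - \opt' \le \a T \cdot \mu^\star$ where $\mu^\star = O(1/\b)$ is the optimal Lagrange multiplier for the ROI constraint in \eqref{eq:pre:opt} (using strong duality for the continuous LP defining $\opt$, and the fact that the ROI slack $\b$ lower-bounds the reciprocal of $\mu^\star$). Substituting $\a = \Theta(V/(\rho T))$ yields an overhead of order $V/\b$, which combined with the $V$-sized regret of the base algorithm gives the claimed $\tfrac{1}{\b^2 \rho}\big(\sqrt{LT}\log T + \sqrt{T\log(1/\d)}\big)$ bound. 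The assumption $\b = \Omega\big(T^{-1/2+\e}(\sqrt L + \sqrt{\log(1/\d)})\big)$ precisely makes $\a \le \b/2$ and keeps the resulting expression $o(T)$.

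The main obstacle is establishing \Cref{lem:tight:reduction} itself: (i) showing that the deterministic bid cap is non-binding on the high-probability good event, so that the cap does not inflate the base algorithm's regret, and (ii) bounding $\opt - \opt'$ by $\a T /\b$, which requires a careful sensitivity analysis of the constrained optimum in \eqref{eq:pre:opt} with respect to perturbations of the ROI constraint.
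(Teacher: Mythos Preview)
Your proposed reduction is \emph{not} the one the paper uses in \Cref{lem:tight:reduction}, and the version you sketch has a real gap.

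The paper's reduction does not tighten the ROI target. Instead it interleaves two algorithms: $\calA_1$ is the base algorithm of \Cref{thm:full:main} (run on the \emph{original} instance, against the original $\opt$), and $\calA_2$ is the primal algorithm of \Cref{thm:full:primal} run with $\chi_t=\psi_t=1$, which (a) has per-round ROI surplus $\ge 0$ deterministically via the safe bid of \Cref{asmp:full}, and (b) with high probability accumulates surplus at rate $\approx\b$, i.e.\ $Q(|\calT_2|,\d)=\b|\calT_2|-\tilde O(\sqrt{|\calT_2|})$. The combined rule is: use $\calA_1$ when the running surplus is $\ge 1$, and $\calA_2$ otherwise. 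Exact ROI satisfaction is immediate (each round can decrease the surplus by at most $1$). For regret, one bounds the number of $\calA_2$-rounds by solving $Q(x,\d)\ge V(T,\d)+2$, giving $|\calT_2|\le O(V/\b)$; the total regret is $\reg_1(T,\d)+2|\calT_2| = O(V/\b)$, which is the stated $\frac{1}{\b^2\rho}\tilde O(\sqrt{LT})$. No LP sensitivity is needed because $\calA_1$ is already benchmarked against $\opt$.

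In your approach the load-bearing step is the claim that, on the good event, ``$\a\cdot\sum_t x_t p_t\ge V$'' with $\a=\Theta(V/(\rho T))$. This is not justified: the budget constraint only gives $\sum_t x_t p_t\le \rho T$, not $\ge c\rho T$, and nothing in \Cref{thm:full:main} lower-bounds the total payment. If the realized spend is small, the slack $\a\sum_t x_t p_t$ can be $\ll V$ and the tightened-to-original conversion fails. Moreover, the violation bound of \Cref{thm:pre:primal_dual} is for the terminal round, not uniformly over $t$, so even if the final surplus were non-negative you would still need control of the \emph{running} surplus $S_{t-1}$ to argue the bid cap is non-binding; at early times $P_t$ is small and $\a P_t - V$ is negative regardless of $\a$. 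Consequently the cap can bind on the good event, and you have no mechanism to bound how many rounds this costs. The paper sidesteps exactly this difficulty by making the fallback an algorithm that \emph{actively increases} the surplus at rate $\b$, which is what allows the $|\calT_2|\lesssim V/\b$ bound.
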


\cref{thm:tight:main_tight} follows from the following lemma, which combines two algorithms to get the desired reduction from approximate satisfaction of the ROI constraint to an exact one.
One algorithm maximizes the objective and has low ROI violation with high probability.
The other algorithm is much simpler: it maximizes value minus payment, i.e., the ROI constraint.
By running the second algorithm for enough rounds, we can accumulate enough `slack' to mitigate the violation caused by the first algorithm.
We formalize this in the following lemma.

\begin{lemma}\label{lem:tight:reduction}
    Assume that there is an algorithm $\calA_1$ such that for every $\d > 0$, with probability at least $1 - \d$, when running on any set of rounds $\calT_1 \sub [T]$ it generates bids that have
    \begin{itemize}[leftmargin=20pt]
        \item regret at most $\reg_1(|\calT_1|, \d)$.
        \item total ROI violation at most $V(|\calT_1|, \d)$.
    \end{itemize}
    and another algorithm $\calA_2$ such that
    \begin{itemize}[leftmargin=20pt]
        \item its bid $b_t$ in round $t$ satisfies $\One{b_t \ge d_t}(v_t - p(b_t, d_t)) \ge 0$.
        \item for every $\d > 0$, with probability at least $1 - \d$, when running in any set of rounds $\calT_2 \sub [T]$ it generates bids $\{b_t\}_{t\in\calT_2}$ such that $\sum_{t \in \calT_2} \One{b_t \ge d_t}(v_t - p(b_t, d_t)) \ge Q(|\calT_2|, \d)$.
    \end{itemize}
    Assume that $\reg(\cdot, \d), V(\cdot, \d), Q(\cdot, \d)$ are increasing functions for every $\d$.
    Consider the algorithm that bids $0$ on rounds where the remaining budget is less than $1$, uses $\calA_1$ on rounds $t$ such that $\sum_{t' \le t-1} \One{b_t \ge d_t}(v_t - p(b_t, d_t)) \ge 1$,
    and $\calA_2$ on other rounds. The algorithm yields exact satisfaction of the ROI constraint. Moreover, for any $\d > 0$, with probability at least $1 - 2\d$ it has regret at most
    \begin{equation*}
        \reg_1\qty( T , \d ) 
        +
        2 Q^{-1}\qty\big( V(T, \d) + 2)
    \end{equation*}
    where $Q^{-1}(V(T, \d) + 2, \d)$ is the $x$ in the equation $Q(x, \d) = V(T, \d) + 2$.
\end{lemma}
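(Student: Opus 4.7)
My plan is to track the cumulative ROI slack $S_t = \sum_{t' \le t} \One{b_{t'} \ge d_{t'}}(v_{t'} - p(b_{t'}, d_{t'}))$ throughout the combined algorithm's execution, and to establish each of the two conclusions separately.

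I first show that the ROI constraint is satisfied with probability $1$ via a deterministic induction that $S_t \ge 0$ for all $t$. The base case $S_0 = 0$ is immediate. For the inductive step I would split according to the switching rule: if $S_{t-1} \ge 1$ the algorithm invokes $\calA_1$ and the single-round contribution is at least $-1$ (because $p(b,d) \in [0,1]$ and $v \ge 0$), so $S_t \ge S_{t-1} - 1 \ge 0$; if $S_{t-1} < 1$ and the remaining budget is at least $1$ the algorithm invokes $\calA_2$, whose per-round contribution is non-negative by the first bullet of its hypothesis; and if the budget is below $1$ the algorithm bids $0$, whose contribution is non-negative since $p(0,0) = 0$ by \Cref{asmp:full}. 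All three cases preserve non-negativity of $S_t$.

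For the regret bound I would first bound $|\calT_2|$ by exploiting the switching rule at the last $\calA_2$-round. Condition on the two high-probability events that $\calA_1$ has ROI violation at most $V(T,\d)$ and $\calA_2$ accumulates slack at least $Q(|\calT_2| - 1, \d)$ on its first $|\calT_2| - 1$ rounds; by a union bound this has probability at least $1 - 2\d$. If $\calT_2 = \emptyset$ the bound is trivial; otherwise, let $t^\star$ be the last round in $\calT_2$. By the switching rule $S_{t^\star - 1} < 1$, and decomposing $S_{t^\star - 1}$ into its $\calA_1$- and $\calA_2$-contributions and using monotonicity of $V(\cdot,\d)$ and $Q(\cdot,\d)$ yields
\begin{equation*}
    Q(|\calT_2| - 1, \d) - V(T, \d) \le S_{t^\star - 1} < 1,
\end{equation*}
so $|\calT_2| - 1 < Q^{-1}(V(T,\d) + 1, \d)$ and therefore $|\calT_2| \le 2\,Q^{-1}(V(T,\d) + 2, \d)$ by the monotonicity of $Q^{-1}$ (using $Q^{-1}(V(T,\d) + 2, \d) \ge 1$, which is immediate since the per-round slack is at most $1$).

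On the same event, the value achieved by the combined algorithm is at least the value produced by $\calA_1$ on $\calT_1$, namely $|\calT_1| \cdot \opt - \reg_1(|\calT_1|, \d) \ge |\calT_1|\cdot\opt - \reg_1(T,\d)$ by monotonicity of $\reg_1$. Since $\opt \le 1$ and discarding the non-negative value from $\calT_2$-rounds and from budget-exhausted rounds (which are absorbed into $\calA_1$'s accounting just as the $b_t' = 0$ rounds are in \Cref{thm:pre:primal_dual}), the regret is at most $(T - |\calT_1|) \cdot \opt + \reg_1(T, \d) \le |\calT_2| + \reg_1(T, \d)$; combining with the bound on $|\calT_2|$ gives the claim. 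The main obstacle I anticipate is the measurability issue in the second step: the sets $\calT_1, \calT_2$ are history-dependent, so one must interpret the hypotheses on $\calA_1$ and $\calA_2$ as guarantees that hold uniformly over such adapted random subsets (which is standard for no-regret algorithms whose concentration goes through a martingale/Freedman-style analysis). Handling this carefully, together with the bookkeeping for budget-exhausted rounds, is where the technical care lies.
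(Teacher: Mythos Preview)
Your approach---showing $S_t \ge 0$ by a one-round induction, then bounding $|\calT_2|$ via the switching condition at the last $\calA_2$-round---is essentially the paper's. The paper looks at the slack $S_\tau$ right \emph{after} the last $\calA_2$ round $\tau$ (obtaining $Q(|\calT_2|,\d) - V(T,\d) < 2$), while you look at $S_{t^\star - 1}$ just \emph{before} it (obtaining $Q(|\calT_2|-1,\d) - V(T,\d) < 1$); both give $|\calT_2| \le Q^{-1}(V(T,\d)+2,\d) + O(1)$.

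There is, however, a real gap in your regret accounting. You assert that budget-exhausted rounds are ``absorbed into $\calA_1$'s accounting,'' i.e., that $\calA_1$'s regret hypothesis applies verbatim on $[T]\setminus\calT_2$ inside the combined run. But the hypothesis on $\calA_1$ is a guarantee for a run in which $\calA_1$ alone draws on the budget; in the combined algorithm, $\calA_2$ also spends up to $|\calT_2|$ units of the shared budget, so the combined algorithm may be forced to bid $0$ earlier than $\calA_1$ would on its own. The paper handles this explicitly: because $\calA_2$ consumes at most $|\calT_2|$ budget, the overall algorithm can run out at most $|\calT_2|$ rounds earlier, contributing an additional $|\calT_2|$ loss. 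That is precisely where the factor $2$ in $2\,Q^{-1}(V(T,\d)+2)$ originates---one $|\calT_2|$ for rounds where $\calA_2$ ran in place of $\calA_1$, and one $|\calT_2|$ for the budget depletion. Your derivation arrives at the stated bound only because your bound on $|\calT_2|$ is looser by a factor of roughly $2$, which happens to mask the missing term; as written, the step from ``value $\ge |\calT_1|\cdot\opt - \reg_1(T,\d)$'' to ``regret $\le |\calT_2| + \reg_1(T,\d)$'' is not justified.
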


Understanding the conditions of the two algorithms is key to the proof of the lemma.
The first algorithm violates the ROI constraint by at most $V(T)$ (for simplicity we drop the dependence on $\d$) and the second algorithm, in order to make up that violation, needs to be run for about $Q^{-1}(V(T))$ rounds.
This has two effects on the total regret.
First, $\calA_1$ is run for $Q^{-1}(V(T))$ fewer rounds, potentially missing out on any reward on those rounds.
Second, $\calA_2$ can use at most $Q^{-1}(V(T))$ of the budget, making the overall algorithm have to stop at most $Q^{-1}(V(T))$ rounds earlier.
This entails the desired bound.

We briefly explain how we get \cref{thm:tight:main_tight} from \cref{lem:tight:reduction}.
It is obvious that the algorithm of \cref{thm:full:main} satisfies the constraints for $\calA_1$.
For $\calA_2$ we can use the algorithm of \cref{thm:full:primal}: if we set $\chi_t = \psi_t = 1$ in $r_t(\cdot)$ of \eqref{eq:full:reward} we get the desired objective.
This makes $ \reg(T_1) = V(T_1) = \frac{1}{\rho \b} \, \tilde O \qty( T_1 ) $ and $ Q(T_2, \d) = \b T_2 - \tilde O(T_2)$; the last equality follows from the assumption on $\beta$, as shown in Equation~\eqref{eq:pre:beta}.
The theorem's regret guarantee follows by the lower bound on $\b$ in the statement of the theorem. We include the proof of \cref{lem:tight:reduction} and \cref{thm:tight:main_tight} in \cref{sec:app:tight}.

\boldparagraph{Necessity of \texorpdfstring{$\b$}{beta}}
%In this section we present a simple theorem that shows why we require $\b > 0$ in \eqref{eq:pre:beta}.
We present an example to show that the dependency on $\b$ is necessary in our regret bounds. More specifically, we show that as $\b$ decreases, the regret of any algorithm that exactly satisfies the ROI constraint increases polynomially in $1/\b$. We show this for second-price auctions, to simplify the problem of regret minimization.

\begin{theorem}\label{thm:tight:beta}
    There is an example in second-price auctions where any algorithm that always satisfies the ROI constraint has regret compared to the optimum LP value (e.g., see \eqref{eq:pre:opt}) at least $(1 - 2\b)\sqrt{T \frac{1}{2\pi\b(1-\b)}}$ for every constant $\b < 1/2$.
\end{theorem}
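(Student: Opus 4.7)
I exhibit a two-point Bernoulli instance in second-price auctions where the LP saturates the ROI constraint in expectation, reduce the regret lower bound to the expected positive part of a centered binomial, and close with the CLT. Concretely, take $v_t \equiv 1-\beta$ and draw $d_t \in \{0,1\}$ i.i.d.\ with $\mathbb{P}[d_t = 0] = \beta$. Since $\mathbb{E}[v - d] = (1-\beta) - (1-\beta) = 0$, bidding $\ge 1$ on every round wins every round and realizes expected slack $0$; the resulting value is $(1-\beta) T$. Any strictly smaller bid wins only on the $d=0$ rounds, with per-round value $(1-\beta)\beta < (1-\beta)$. Thus the LP value equals $(1-\beta) T$, attained by the constant bid $1$.

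\textit{Reduction to a binomial tail.} Because $d_t \in \{0,1\}$, the algorithm's bid in round $t$ is WLOG in $\{0,1\}$; let $w_t \in \{0,1\}$ denote the high-bid indicator. Writing $N_0 = |\{t : d_t = 0\}|$, $N_1 = T - N_0$, $K_1' = \sum_t w_t \mathbb{1}[d_t = 1]$, and $L_1 = N_1 - K_1'$ for the number of ``missed'' $d=1$ rounds, a direct calculation gives realized value $(1-\beta)(N_0 + K_1')$ and realized slack $S_T = (1-\beta) N_0 - \beta K_1'$. Hence the regret against the LP equals exactly $(1-\beta) L_1$, and the pointwise ROI constraint $S_T \ge 0$ rearranges (via $K_1' = N_1 - L_1$ and $N_0 = T - N_1$) into
\begin{equation*}
    L_1 \; \ge \; \frac{(N_1 - (1-\beta) T)_+}{\beta} \quad \text{a.s.}
\end{equation*}
Crucially, the right-hand side depends only on nature's draws $d_1, \ldots, d_T$, so this inequality holds on every sample path for every (possibly adaptive) strategy.

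\textit{CLT and conclusion.} The binomial $N_1 \sim \mathrm{Bin}(T, 1-\beta)$ has mean $(1-\beta) T$ and variance $T\beta(1-\beta)$; hence $\mathbb{E}[(N_1 - (1-\beta) T)_+] \ge \sqrt{T\beta(1-\beta)/(2\pi)}$ for all large enough $T$ by the CLT (and for all $T$ up to a lower-order Berry--Esseen correction). Taking expectations of the pointwise bound on $L_1$ and multiplying by $(1-\beta)$ yields
\begin{equation*}
    \mathbb{E}[\mathrm{Regret}] \; \ge \; \frac{1-\beta}{\beta} \sqrt{\frac{T\beta(1-\beta)}{2\pi}} \; = \; \sqrt{\frac{(1-\beta)^3 T}{2\pi \beta}} \; \ge \; (1-2\beta)\sqrt{\frac{T}{2\pi\beta(1-\beta)}},
\end{equation*}
where the last inequality uses $(1-\beta)^2 = 1 - 2\beta + \beta^2 \ge 1-2\beta$ for $\beta \in (0, 1/2)$. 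The one delicate step is the finite-$T$ form of the binomial tail estimate; this is standard via Berry--Esseen or Stirling's formula and introduces only $O(1)$ errors, which are lower order than the $\sqrt{T}$ leading term at any fixed $\beta$.
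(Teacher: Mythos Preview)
Your proof is correct and takes a genuinely different route from the paper's.

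\textbf{Different construction.} The paper uses a correlated two-atom distribution $(v,d)\in\{(1,0),(\tfrac{1-2\b}{1-\b},1)\}$, whereas you take a constant value $v\equiv 1-\b$ and an independent Bernoulli $d$. Both have LP value $(1-\b)T$ saturating the ROI constraint in expectation.

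\textbf{Different key step.} The paper tracks the running ROI $R_t$, argues that any algorithm must keep $R_t\ge 0$ at \emph{every} round (otherwise a tail of all-bad rounds forces $R_T<0$ with positive probability), and reduces to bounding $\E[\min_t R_t']$ for a mean-zero random walk. You bypass this entirely: you only use the terminal constraint $S_T\ge 0$, rearrange it into the pathwise inequality $L_1\ge (N_1-(1-\b)T)_+/\b$ whose right side depends only on nature, and take expectations. This is cleaner and avoids justifying the ``$R_t\ge 0$ for all $t$'' step. Both arguments close via the same CLT/Stirling estimate on the centered binomial, and in fact your intermediate bound $\sqrt{(1-\b)^3 T/(2\pi\b)}$ is slightly sharper than the paper's near $\b=1/2$.

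\textbf{One caveat worth noting.} In the paper's instance the maximal slack in \eqref{eq:pre:beta} is exactly $\b$ (bid $v$, win only the $(1,0)$ rounds). In your instance the maximal slack is $\b(1-\b)$, not $\b$. The theorem as stated does not tie its parameter to \eqref{eq:pre:beta}, so your proof is formally fine; but since the surrounding discussion is about showing that the $\b$-dependence in the upper bounds is necessary, you might want to remark that your instance has slack $\b(1-\b)\in[\b/2,\b]$, which still witnesses the same polynomial $1/\b$ blow-up.
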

\section{Deferred proof of Section \ref{sec:bandit}} \label{sec:app:bandit}

\subsection{Deferred proof of Section \ref{ssec:bandit:lower}} \label{sec:app:bandit:lower}

Here we include the proof of \cref{thm:bandit_lower_bound}. 

\begin{proof}[Proof of \cref{thm:bandit_lower_bound}]
    Fix the player's budget, $\rho = 1/4$.
    Define the following
    \begin{itemize}
        \item $K = 2 T^{1/3}$ for some positive $c$ that will be defined later.
        \item $\e = \frac{1}{3K}$.
        \item For $i = 0, 1, 2, \ldots, K$ let $d_i = 1/3 + i\e$. Note $d_0 = 1/3$ and $d_K = 2/3$.
    \end{itemize}
    
    We will consider that $d_t$ can only be $0$, $1$, or one of the values in $\{d_i\}_{i = 0, \ldots, K-1}$; note that even though we defined the value $d_K$, $d_t$ cannot take that value.
    We will consider different distributions that can generate $d_t$, each specified by a CDF.
    We first define the base CDF-like distribution:
    \begin{equation}\label{eq:imp:cdf_base}
        F_\btt(x)
        =
        \frac{3}{4 - x}
    \end{equation}
    
    We use $\PRs_\btt$ and $\Es_\btt$ to denote the probability and expectation when $d_t$ is generated by $F_\btt$, which means that for any $x = 0, d_0, d_1, \ldots, d_{K-1}$, $\Prs[\btt]{d_t \le x} = F_\btt(x)$ and $\Prs[\btt]{d_t \le 1} = 1$.
    More precisely, even though we do not use the description of the probability density function, we have
    \begin{equation*}
        \Prs[\btt]{d_t = x} = 
        \begin{cases}
            F_\btt(0) = 3/4
            &, \text{ if } x = 0
            \\
            F_\btt(d_0) - F_\btt(0) = 3/44
            &, \text{ if } x = d_0 = 1/3
            \\
            F_\btt(d_i) - F_\btt(d_{i-1}) = \Theta(\e)
            &, \text{ if } x = d_i \text{ for some } i = 1, 2, \ldots, K-1
            \\
            1 - F_\btt(d_{K-1})% = 1 - \frac{3}{4}\frac{1}{1 - \rho(1/2 + \frac{K-1}{4K})}
            &, \text{ if } x = 1
            \\
            0 
            &, \text{ otherwise}
        \end{cases}
        .
    \end{equation*}
    
    We note that $F_\btt(d_{K-1}) < F_\btt(d_K) = 1/10$; this guarantees that our distribution is well defined.
    
    Given the base CDF $F_\btt$ we now define a different CDF-like function, $F_j$ for every $j = 0, 1, \ldots, K-1$.
    The distribution of $d_t$ associated with $F_j$ is going to be identical to the one associated with $F_\btt$. except for the probability of 
    \begin{equation}\label{eq:imp:cdf_d}
        F_j(x)
        =
        \begin{cases}
            F_\btt(x), &\text{ if } x \notin [d_j, d_{j+1})
            \\
            F_\btt(d_{j+1}), &\text{ if } x \in [d_j, d_{j+1})
        \end{cases}
    \end{equation}
    
    We use $\PRs_j$ and $\Es_j$ to denote the probability and expectation when $d_t$ is generated by $F_j$. This means that $\Prs[j]{d_t \le x} = \Prs[\btt]{d_t \le x}$ for all $x = 0, d_0, d_1, \ldots, d_{j-1}, d_{j+1}, \ldots, d_{K-1}, 1$ and $\Prs[j]{d_t \le d_j} = \Prs[\btt]{d_t \le d_{j+1}}$.
    
    We are going to assume that before round $t$, $j$ is picked adversarially.
    We now prove what a lower bound on the value of the optimal solution under $F_j$.
    
    \begin{lemma}\label{cl:imp:jth}
        When $d_t$ is generated according to $F_j$ for any $j=0, \ldots, K-1$ as explained above, then the value of the optimal solution is
        \begin{equation*}
            \opt_j
            \ge
            \frac{13}{16} T
            +
            \frac{3}{32} T \e
            % - O(\sqrt T)
        \end{equation*}
    \end{lemma}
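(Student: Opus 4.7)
The plan is to exhibit a specific feasible mixed bidding strategy that achieves the claimed per-round value under $F_j$, then multiply by $T$. Since $v_t \equiv 1$ and there is no ROI constraint, the only constraint to satisfy (in expectation, as is standard for $\opt$) is the per-round budget $\rho = 1/4$.

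I will use a two-point mixture of the constant bidding ``functions'' $b \equiv 0$ and $b \equiv d_j$. Bidding $0$ wins iff $d_t = 0$, which under $F_\btt$ (and hence under $F_j$, since the modification does not touch the mass at $0$) has probability $F_\btt(0) = 3/4$ and costs nothing. Bidding $d_j$ wins iff $d_t \le d_j$, which under $F_j$ has probability $F_j(d_j) = F_\btt(d_{j+1}) = 3/(4-d_{j+1})$ and costs $d_j$ per win. Choose the mixture weight on the $d_j$-bid to be
\[
    x_b \;=\; \frac{4 - d_{j+1}}{12\, d_j},
\]
so that the expected per-round payment equals exactly $x_b \cdot d_j \cdot \tfrac{3}{4-d_{j+1}} = 1/4 = \rho$. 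The feasibility check $x_b \in [0,1]$ reduces to $4 - d_{j+1} \le 12 d_j$, which follows from $d_j \ge d_0 = 1/3$.

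The expected per-round value is then
\[
    (1 - x_b)\cdot \tfrac{3}{4} + x_b \cdot \tfrac{3}{4-d_{j+1}}
    \;=\;
    \tfrac{3}{4} + x_b \cdot \tfrac{3 d_{j+1}}{4(4-d_{j+1})}
    \;=\;
    \tfrac{3}{4} + \tfrac{d_{j+1}}{16\, d_j}
    \;=\;
    \tfrac{13}{16} + \tfrac{\e}{16\, d_j},
\]
using $d_{j+1} = d_j + \e$. Since $d_j \le d_{K-1} < 2/3$, we get $\tfrac{1}{16 d_j} \ge \tfrac{3}{32}$, so the per-round value is at least $\tfrac{13}{16} + \tfrac{3\e}{32}$. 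Summing over $T$ rounds yields the desired lower bound on $\opt_j$.

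There is no real obstacle here: the identity $(1-x_b)\cdot \tfrac34 + x_b \cdot F_\btt(d_{j+1}) = \tfrac{13}{16}+\tfrac{\e}{16 d_j}$ is precisely the algebraic ``rigidity'' that the choice of base CDF $F_\btt(x) = 3/(4-x)$ was designed to produce (every zero-plus-positive-bid mixture under $F_\btt$ gives exactly $13/16$, and the $F_j$ perturbation adds a clean $\Theta(\e)$ bonus only at the perturbed bid $d_j$). The only items to check carefully are (i) $x_b\in[0,1]$, which uses $d_j \ge 1/3$, and (ii) the lower bound on $1/(16 d_j)$, which uses $d_j \le 2/3$; both follow immediately from the construction of the grid $\{d_i\}$.
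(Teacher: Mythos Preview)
Your proof is correct and is essentially identical to the paper's own argument: both use the same two-point mixture between bidding $0$ and bidding $d_j$, with the same mixing weight (your $x_b = (4 - d_{j+1})/(12 d_j)$ is exactly the paper's $\tfrac{11 - 3(j+1)\e}{12(1 + 3j\e)}$ after substituting $d_j = 1/3 + j\e$), the same payment and value computations, and the same final bound via $d_j \le 2/3$. Your presentation is in fact a bit cleaner, keeping everything in terms of $d_j, d_{j+1}$ rather than expanding in $j$ and $\e$.
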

    
    \begin{proof}
        We consider the strategy that bids $d_j$ with probability $\frac{11 - 3 (i + 1) \e}{12 (1 + 3 i \e)}$ and bids $0$ otherwise.
        The expected per round payment of this strategy is
        \begin{equation*}
            \frac{11 - 3 (i + 1) \e}{12 (1 + 3 i \e)} d_j F_j(d_j)
            = 
            \frac{11 - 3 (i + 1) \e}{12 (1 + 3 i \e)} \qty( \frac{1}{3} + i \e ) \frac{3}{4 - (\frac{1}{3} + (i+1) \e) }
            =
            \frac{1}{4}
        \end{equation*}
        which means it satisfies the budget constraint in expectation, since $\rho = 1/4$.
        The expected per-round value of this solution is
        \begin{alignat*}{3}
            & \frac{11 - 3 (i + 1) \e}{12 (1 + 3 i \e)} F_j(d_j)
            +
            \qty( 1 - \frac{11 - 3 (i + 1) \e}{12 (1 + 3 i \e)} ) F_j(0)
            \\
            & \quad =
            \frac{11 - 3 (i + 1) \e}{12 (1 + 3 i \e)} \frac{3}{4 - (\frac{1}{3} + (i+1) \e) }
            +
            \qty( 1 - \frac{11 - 3 (i + 1) \e}{12 (1 + 3 i \e)} ) \frac{3}{4}
            \\
            & \quad =
            \frac{13}{16}
            +
            \frac{3 \e}{16(1 + 3 i \e)}
            \ge
            \frac{13}{16}
            +
            \frac{3 \e}{16(1 + 3 K \e)}
        \end{alignat*}
        where the last inequality follows from $i \le K$.
        Substituting $\e = \frac{1}{3K}$ we get the lemma.
    \end{proof}

    In round $t$ the player bids $b_t$ and observes $x_t \in \{0, 1\}$ (if she won or not).
    We are going to assume that the player runs some deterministic algorithm, which comes w.l.o.g. since the environment is randomized.
    This means that the player's bid $b_t$ in round $t$ is a deterministic function of $x_1, \ldots, x_{t-1}$.
    We denote $x = x_{1:T}$.
    We also assume w.l.o.g. that the player's bids are always one of $0, d_0, \ldots, d_{K-1}, 1$, since any other bid is suboptimal.
    Let $N_i$ be the total number of times the player bids $d_i$ and $M_i$ be the total number of times the player bids $d_i$ and wins.
    We prove the following.
    
    \begin{lemma}\label{cl:imp:close}
        Let $g:\{0, 1\}^T \to \R$ be a function defined on $x$. Then, for every $j$, it holds that
        \begin{equation*}
            \left| \Es_j g(x) - \Es_\btt g(x)  \right|
            \le
            \left( \max_x |g(x)| \right) \e \sqrt{ 2 \Exs[\btt]{M_j} }
        \end{equation*}
    \end{lemma}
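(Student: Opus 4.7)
The plan is to bound the total variation distance between $\PRs_\btt$ and $\PRs_j$ via KL divergence, then convert that to the claimed bound on $|\Es_j g - \Es_\btt g|$. The starting point is $|\Es_j g(x) - \Es_\btt g(x)| \le 2 (\max_x |g(x)|) \cdot \mathrm{TV}(\PRs_\btt, \PRs_j)$, combined with Pinsker's inequality $\mathrm{TV}(\PRs_\btt, \PRs_j) \le \sqrt{\tfrac{1}{2}\KL{\PRs_\btt}{\PRs_j}}$. It thus suffices to show $\KL{\PRs_\btt}{\PRs_j} \le \e^2 \Es_\btt[M_j]$.

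For the KL bound, I would apply the chain rule for KL divergence. Since $b_t$ is a deterministic function of $x_{1:t-1}$, and the conditional distributions of $x_t$ under $\PRs_\btt$ and $\PRs_j$ agree on every round where $b_t \neq d_j$, only rounds with $b_t = d_j$ contribute. On such rounds $x_t$ is Bernoulli with parameter $p := F_\btt(d_j)$ under $\PRs_\btt$ and $q := F_\btt(d_{j+1})$ under $\PRs_j$, so $\KL{\PRs_\btt}{\PRs_j} = \Es_\btt[N_j] \cdot \KL{\mathrm{Ber}(p)}{\mathrm{Ber}(q)}$.

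The core calculation is to show $\KL{\mathrm{Ber}(p)}{\mathrm{Ber}(q)} \le \tfrac{81}{100}\e^2$. I would use the standard $\chi^2$-divergence upper bound $\KL{\mathrm{Ber}(p)}{\mathrm{Ber}(q)} \le (p-q)^2/(q(1-q))$, substitute the explicit form $F_\btt(x) = 3/(4-x)$, and simplify: $(p-q)^2 = 9\e^2/[(4-d_j)^2(4-d_{j+1})^2]$ and $q(1-q) = 3(1-d_{j+1})/(4-d_{j+1})^2$, so the ratio collapses to $3\e^2/[(4-d_j)^2(1-d_{j+1})]$. Using $d_j, d_{j+1} \in [1/3, 2/3]$ yields $(4-d_j)^2(1-d_{j+1}) \ge (10/3)^2 \cdot (1/3) = 100/27$, so $\KL{\mathrm{Ber}(p)}{\mathrm{Ber}(q)} \le 81\e^2/100$. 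The final conversion from $N_j$ to $M_j$ uses $\Es_\btt[M_j] = F_\btt(d_j)\,\Es_\btt[N_j] \ge (9/11)\,\Es_\btt[N_j]$, giving $\KL{\PRs_\btt}{\PRs_j} \le (81/100)(11/9)\e^2 \Es_\btt[M_j] = (99/100)\e^2 \Es_\btt[M_j] < \e^2 \Es_\btt[M_j]$, as required.

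The main obstacle is arranging the constants so that the lemma comes out with the exact factor shown in the statement rather than a looser numerical constant. The choice of domain $d_j, d_{j+1} \in [1/3, 2/3]$ and the particular shape $F_\btt(x) = 3/(4-x)$ are critical: they force both $p, 1-p, q, 1-q$ to be bounded away from $0$ and $1$, and the two numerical factors $81/100$ (from the $\chi^2$ bound) and $11/9$ (from the $N_j \mapsto M_j$ conversion) conveniently multiply to below $1$. The rest is routine information-theoretic manipulation.
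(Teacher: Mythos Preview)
Your proposal is correct and follows essentially the same route as the paper: bound the expectation gap by the $L_1$/total-variation distance, apply Pinsker, decompose the KL via the chain rule so that only rounds with $b_t=d_j$ contribute, upper-bound the per-round Bernoulli KL by the $\chi^2$-divergence (the paper arrives at the same $(p-q)^2/(q(1-q))$ bound via $\log x\le x-1$), and then convert $N_j$ to $M_j$ using $F_\btt(d_j)\ge 9/11$. Your simplified ratio $3\e^2/[(4-d_j)^2(1-d_{j+1})]$ differs in form from the paper's stated $3\e^2/[(1-d_j)(4-d_{j+1})^2]$, but both lower-bound the denominator by $(10/3)^2\cdot(1/3)=100/27$ and land on the same $81/100$ and $99/100$ constants.
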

    
    \begin{proof}
        For any $j$ and $i$ we have
        \begin{alignat*}{3}
            \Line{
                \left| \Es_j f(x) - \Es_\btt f(x)  \right|
            }{=}{
                \left|
                    \sum_{x\in\{0,1\}^T} f(x) \left( \Prs[j]{x} - \Prs[\btt]{x} \right)
                \right|
            }{}
            \\
            \Line{}{\le}{
                \left( \max_x |f(x)| \right)
                \sum_{x\in\{0,1\}^T} \left|
                    \Prs[j]{x} - \Prs[\btt]{x}
                \right|
            }{}
            \\
            \Line{}{=}{
                \left( \max_x |f(x)| \right)
                \left| \mu_j - \mu_\btt \right|_1
            }{}
        \end{alignat*}
        where $\mu_j$ and $\mu_\btt$ are the probability distributions on $x$ given that $d_t$ is sampled from $F_j$ and $F_\btt$, respectively.
        Using the properties of the KL divergence, we have that
        \begin{equation*}
            \left| \mu_j - \mu_\btt \right|_1^2
            \le
            2 \KL{\mu_\btt}{\mu_j}
        \end{equation*}
        
        Now fix some sequence $x$. Conditioned on $x_{1:t-1}$, $b_t$ is deterministic. In addition, if $b_t \ne d_j$ it holds
        \begin{equation*}
            \mu_\btt(x_t = 1 | x_{1:t-1})
            =
            \mu_j(x_t = 1 | x_{1:t-1})
        \end{equation*}
        which, if $b_t \ne d_j$, implies
        \begin{equation*}
            \KL{\mu_\btt(x_t = \cdot | x_{1:t-1})}{\mu_j(x_t = \cdot | x_{1:t-1})}
            =
            0
        \end{equation*}
        
        When $b_t = d_j$ it holds that
        \begin{align*}
            \mu_\btt(x_t = 1 | x_{1:t-1})
            =&\;
            F_\btt(d_j)
            =
            \frac{3}{4 - \left( d_0 + j \e \right)}
            \\
            \mu_j(x_t = 1 | x_{1:t-1})
            =&\;
            F_j(d_j)
            =
            F_\btt(d_{j+1})
            =
            \frac{3}{4 - \left( d_0 + (j+1) \e \right)}
        \end{align*}
        which implies that if $b_t = d_j$
        \begin{alignat*}{3}
            \Line{
                \KL{\mu_\btt(x_t = \cdot | x_{1:t-1})}{\mu_j(x_t = \cdot | x_{1:t-1})}
            }{=}{
                \KL{\text{Bern}(F_\btt(d_j))}{\text{Bern}(F_\btt(d_{j+1}))}
            }{}
            \\
            \Line{}{=}{
                F_\btt(d_j) \log \frac{F_\btt(d_j)}{F_\btt(d_{j+1})}
                +
                (1 - F_\btt(d_j)) \log \frac{1 - F_\btt(d_j)}{1 - F_\btt(d_{j+1})}
            }{}
            \\
            \Line{}{\le}{
                F_\btt(d_j) \left( \frac{F_\btt(d_j)}{F_\btt(d_{j+1})} - 1 \right)
                +
                (1 - F_\btt(d_j)) \left( \frac{1 - F_\btt(d_j)}{1 - F_\btt(d_{j+1})} - 1 \right)
            }{}
            \\
            \Line{}{=}{
                3\frac{\e^2}{(1 - d_j)(4 - d_{j+1})^2}
                \;\le\;
                \frac{81}{100} \e^2
            }{}
        \end{alignat*}
        where $\text{Bern}(p)$ is a Bernoulli random variable with mean $p$, the first inequality follows by using $\log x \le x - 1$ for all $x > 0$ and the last inequality follows by using $d_j,d_{j+1} \le 2/3$.
        This implies that for any $b_t$,
        \begin{equation*}
            \KL{\mu_\btt(x_t = \cdot | x_{1:t-1})}{\mu_j(x_t = \cdot | x_{1:t-1})}
            \le
            \frac{81}{100} \e^2 \One{b_t(x) = d_j}
        \end{equation*}
        
        Taking expectations over $x\sim \mu_\btt$ and adding over $t$ we get
        \begin{equation*}
            \KL{\mu_\btt}{\mu_j}
            \le
            \frac{81}{100} \e^2 \Exs[\btt]{N_j}
            \le
            \frac{99}{100} \e^2 \Exs[\btt]{M_j}
            \le
            \e^2 \Exs[\btt]{M_j}
        \end{equation*}
        where the last inequality follows from the fact that $\Exs[\btt]{M_j} = F_\btt(d_j) \Exs[\btt]{N_j}$ and $F_\btt(d_j) \ge 9/11$.
        Combining this with what we proved before we get
        \begin{equation*}
            \left| \Es_j f(x) - \Es_\btt f(x)  \right|
            \le
            \left( \max_x |f(x)| \right) \e \sqrt{ 2 \Exs[\btt]{M_j} }
        \end{equation*}
        which proves the claim.
    \end{proof}
    
    Now note that since the agent's payment is at least $d_j M_j$, it must always hold $d_j M_j \le T \rho$, implying $M_j \le T \rho / d_j \le \frac{3}{4} T$.
    Applying \cref{cl:imp:close} for $f(x) = M_j$ we get that for any $j$
    \begin{equation*}
        \Es_j M_j
        \le
        \Es_\btt M_j
        +
        \frac{3\sqrt 2}{4} T \e \sqrt{ \Es_\btt M_j }
    \end{equation*}
    
    We notice that because it must hold $\sum_j d_j M_j \le T \rho$, there must be some $j$ such that $\Es_\btt M_j \le \frac{3}{4K} T$.
    Fix that $j$ and use that in the above inequality, making the above inequality for that $j$
    \begin{equation*}
        \Es_j M_j
        \le
        \frac{3 T}{4 K}
        +
        \frac{3\sqrt 2}{4} T \e \sqrt{ \frac{3T}{4K} }
        =
        \frac{3}{4} \frac{T}{K}
        +
        \frac{3 \sqrt{6}}{8} \frac{T^{3/2}\e}{\sqrt K}
    \end{equation*}
    Recalling that $K = 2 T^{1/3}$ and $\e = 1 / (3K)$ we get
    \begin{equation*}
        \Es_j M_j
        \le
        \frac{3}{8} T^{2/3}
        +
        \frac{\sqrt{3}}{16} T
    \end{equation*}
    
    Since $\Es_j M_j = F_j(d_j) \Es_j N_j$ and $F_j(d_j) = \frac{3}{4 - d_{j+1}} \ge 9/11$ (which follows from $d_{j+1} \ge 1/3$) the above becomes
    \begin{equation}\label{eq:band:1}
        \Es_j N_j
        \le
        \frac{33}{56} T^{2/3}
        +
        \frac{11\sqrt 3}{144} T
        :=
        U T
    \end{equation}
    
    Using the above upper bound on $\Es_j N_j$, we are going to upper bound the value the player can earn in the $j$-th instance.
    To do that, we are going to consider that the player knows that she is playing against the distribution of $\Prs[j]{\cdot}$ but is restricted by \eqref{eq:band:1}.
    In addition, we consider that the player can satisfy the budget constraint in expectation, since this only increases the budget she can earn.
    Given this setting, the player's value is upper bounded by the following LP, where $n_i$ represents the expected number of times the player bids $d_i$, $\ell_0$ is the expected number of times the player bids $0$, and $\ell_1$ is the expected number of times the player bids $1$:
    \begin{align*}
        \max_{\ell_0, \ell_1, n_0, \ldots, n_{K-1}} \quad
        & F_j(0) \ell_0 + \ell_1 + \sum_{i=0}^{K-1} F_j(d_i) n_i
        \\
        & \ell_1 + \sum_{i=0}^{K-1} d_i F_j(d_i) n_i \le \frac{T}{4}
        \\
        & \ell_0 + \ell_1 + \sum_{i=0}^{K-1} n_i \le T
        \\
        & n_j \le U T
    \end{align*}
     
    We upper bound the value of the above using its dual:
    \begin{align*}
        \min_{\l, \mu, \nu} \quad
        & \l \frac{T}{4} + \mu T + \nu U T
        \\
        & \mu \ge F_j(0)
        \\
        & \l + \mu \ge 1
        \\
        & \l d_i F_j(d_i) + \mu \ge F_j(d_i), \quad\forall i\ne j
        \\
        & \l d_j F_j(d_j) + \mu + \nu \ge F_j(d_j)
    \end{align*}
    
    We notice that the solution $\l = 1/4$, $\mu = 3/4$, $\nu = \e /4$ is feasible (which follows from some simple but lengthy algebra).
    This means that the player's expected reward is at most 
    \begin{equation*}
        T\left(
            \frac{13}{16}
            +
            \frac{U\e}{4}
        \right)
    \end{equation*}
    
    Using \cref{cl:imp:jth} we get that the expected regret it at least
    \begin{alignat*}{3}
        \Line{\reg}{\ge}{
            \frac{3}{32} T \e
            -
            \frac{\e}{4}\left(
                \frac{33}{56} T^{2/3}
                +
                \frac{11\sqrt 3}{144} T
            \right)
        }{}
        \\
        \Line{}{=}{
            \frac{1}{64} T^{2/3}
            -
            \frac{11}{448} T^{1/3}
            -
            \frac{11\sqrt 3}{3456} T^{2/3}
        }{\e = \frac{T^{-1/3}}{6}}
        \\
        \Line{}{\approx}{
            0.01 T^{2/3} 
            -
            0.02 T^{1/3}
        }{}
    \end{alignat*}
    which is $\Omega(T^{2/3})$, as promised.
\end{proof}

\subsection{Deferred proofs of Section \ref{ssec:bandit:upper}} \label{ssec:app:bandit:upper}

We first present the algorithm for \cref{thm:bandit:gen_algo} which low interval regret.
The algorithm is a modification of the algorithm of \cite{DBLP:conf/nips/Neu15} so that it works for time-varying ranges.
We present the algorithm in \cref{algo:bandit}.

\begin{algorithm}[t]
% \SetAlgoNoLine
\DontPrintSemicolon
\caption{No interval regret algorithm for bandit information with for time-varying ranges}
\label{algo:bandit}
\KwIn{Number of rounds $T$, number of actions $K$}

Set $\s = \frac{1}{T}$, $\xi = \frac{1}{2 \sqrt{T K}}$, $\theta = \frac{1}{\sqrt{T K}}$\;

Initialize weight for each action $w_1(a) = 1 \quad \forall \, a\in [K]$\;
\For{$t \in [T]$}
{
    Calculate $p_t(a) = \frac{w_t(a)}{\sum_{a'} w_t(a)}$\;
    
    Sample and play $a_t \sim p_t(\cdot)$\;
    
    Receive $\ell_t(a_t)$\;
    
    Calculate $\tilde \ell_t(a) = \frac{\ell_t(a)}{p_t(a) + \xi} \One{a = a_t}$ for all $a \in [K]$\;
    
    Receive loss range $[0, U_{t+1}]$ and calculate $\eta_{t+1} = \frac{\theta}{U_{t+1}}$\;
    
    Calculate $w_{t+1}(a) = (1 - \sigma) w_t(a) \exp(- \eta_{t+1} \tilde\ell_t(a)) + \frac{\sigma}{K}\sum_{a'} w_t(a') \exp(- \eta_{t+1} \tilde\ell_t(a'))$\;
}
\end{algorithm}

We now prove \cref{thm:bandit:gen_algo}.

\begin{proof}[Proof of \cref{thm:bandit:gen_algo}]
    First we define
    \begin{equation*}
        p_{t+1}'(a)
        =
        \frac{
            p_t(a) \exp\qty( - \eta_t \tilde \ell_t(a) )
        }{
            \sum_{a'} p_t(a') \exp\qty( - \eta_t \tilde \ell_t(a') )
        }
    \end{equation*}
    
    Next we show that for every $a^*$
    \begin{equation}\label{eq:bandit:31}
        \sum_a p_t(a) \tilde \ell_t(a)
        -
        \tilde \ell_t(a^*)
        \le
        \frac{1}{\eta_t}
        \log\qty( \frac{p_{t+1}'(a^*)}{p_t(a^*)} )
        +
        \frac{\eta_t}{2} \sum_a p_t(a) \tilde \ell_t^2(a)
    \end{equation}
    
    First we notice that
    \begin{alignat*}{3}
        \Line{
            \sum_a p_t(a) \exp\qty( - \eta_t \tilde\ell_t(a) )
        }{\le}{
            1 - \eta_t \sum_a p_t(a) \tilde\ell_t(a) + \frac{\eta_t^2}{2} \sum_a p_t(a) \tilde\ell_t^2(a)
        }{\substack{x \ge 0 \implies\\e^{-x} \le 1 - x + x^2/2}}
        \\
        \Line{}{\le}{
            \exp\qty(
                - \eta_t \sum_a p_t(a) \tilde\ell_t(a)
                + \frac{\eta_t^2}{2} \sum_a p_t(a) \tilde\ell_t^2(a)
            )
            % 1 - \eta_t \sum_a p_t(a) \tilde\ell_t(a) + \frac{\eta_t^2}{2} \sum_a p_t(a) \tilde\ell_t^2(a)
        }{1 + x \le e^x}
        \\
    \end{alignat*}
    % 
    % where in the last inequality we used $\eta_t \tilde \ell_t(a) \le 1$ for all $a$.
    Using the definition of $p_{t+1}'(a^*)$ in the l.h.s. of the inequality above we get
    \begin{equation*}
        \exp\qty( - \eta_t \tilde \ell_t(a^*) )
        \frac{p_t(a^*)}{p_{t+1}'(a^*)}
        \le
        \exp\qty(
            - \eta_t \sum_a p_t(a) \tilde\ell_t(a)
            + \frac{\eta_t^2}{2} \sum_a p_t(a) \tilde\ell_t^2(a)
        )
    \end{equation*}
    
    Taking a logarithm and rearanging we get \eqref{eq:bandit:31}.
    
    We now notice that
    \begin{alignat*}{3}
        \Line{
            p_{t+1}(a^*)
        }{=}{
            \frac{
                w_{t+1}(a^*)
            }{
                \sum_a w_{t+1}(a)
            }
        }{}
        \\
        \Line{}{=}{
            \frac{
                (1 - \sigma) w_t(a^*) \exp(- \eta_t \tilde \ell_t(a^*)) + \frac{\sigma}{K}\sum_{a'} w_t(a') \exp(- \eta_t \tilde\ell_t(a'))
            }{
                \sum_a \qty(
                    (1 - \sigma) w_t(a) \exp(- \eta_t \tilde \ell_t(a)) + \frac{\sigma}{K}\sum_{a'} w_t(a') \exp(- \eta_t \tilde\ell_t(a'))
                )
            }
        }{}
        \\
        \Line{}{=}{
            \frac{
                (1 - \sigma) w_t(a^*) \exp(- \eta_t \tilde \ell_t(a^*)) + \frac{\sigma}{K}\sum_{a'} w_t(a') \exp(- \eta_t \tilde\ell_t(a'))
            }{
                \sum_a w_t(a) \exp(- \eta_t \tilde \ell_t(a))
            }
        }{}
        \\
        \Line{}{=}{
            \frac{
                (1 - \sigma) p_t(a^*) \exp(- \eta_t \tilde \ell_t(a^*)) + \frac{\sigma}{K}\sum_{a'} p_t(a') \exp(- \eta_t \tilde\ell_t(a'))
            }{
                \sum_a p_t(a) \exp(- \eta_t \tilde \ell_t(a))
            }
        }{}
        \\
        \Line{}{\ge}{
            (1 - \sigma) p_{t+1}'(a^*)  + 0
        }{}
    \end{alignat*}
    
    Bounding $p_{t+1}'(a^*)$ in \eqref{eq:bandit:31} we get
    \begin{equation*}
        \sum_a p_t(a) \tilde \ell_t(a)
        -
        \tilde \ell_t(a^*)
        \le
        \frac{1}{\eta_t}
        \log\qty( \frac{p_{t+1}(a^*)}{(1-\s) p_t(a^*)} )
        +
        \frac{\eta_t}{2} \sum_a p_t(a) \tilde \ell_t^2(a)
    \end{equation*}
    
    Let $I = [\tau_1, \tau_2]$; summing the above for all $i \in I$ we get
    \begin{equation}\label{eq:bandit:32}
        \sum_{t \in I}\sum_a p_t(a) \tilde \ell_t(a)
        -
        \sum_{t \in I}\tilde \ell_t(a^*)
        \le
        \sum_{t \in I}\frac{1}{\eta_t}
        \log\qty( \frac{p_{t+1}(a^*)}{(1-\s) p_t(a^*)} )
        +
        \sum_{t \in I}\frac{\eta_t}{2} \sum_a p_t(a) \tilde \ell_t^2(a)
    \end{equation}
    
    Now we focus on
    \begin{alignat}{3} \label{eq:bandit:33}
        \sum_{t \in I} & \frac{1}{\eta_t}
            \log\qty( \frac{p_{t+1}(a^*)}{(1-\s) p_t(a^*)} )
        \nonumber\\
        \Line{}{=}{
            \frac{\log p_{\tau_2 + 1}(a^*)}{\eta_{\tau_2}}
            -
            \frac{\log \qty( (1 - \s)p_{\tau_1 }(a^*) )}{\eta_{\tau_1}}
            % +
            % \sum_{t \in I \setminus \{\tau_1\}}
                % \log\qty( \frac{p_t^{1/\eta_{t-1}}(a^*)}{(1-\s)^{1/\eta_t} p_t^{1/\eta_t}(a^*)} )
        }{}
        \nonumber\\
        \Line{}{}{
            % \frac{\log p_{\tau_2 + 1}(a^*)}{\eta_{\tau_2}}
            % -
            % \frac{\log \qty( (1 - \s)p_{\tau_1 }(a^*) )}{\eta_{\tau_1}}
            \quad +
            \sum_{t \in I \setminus \{\tau_1\}}
                \log\qty( \frac{p_t^{1/\eta_{t-1}}(a^*)}{(1-\s)^{1/\eta_t} p_t^{1/\eta_t}(a^*)} )
        }{}
        \nonumber\\
        \Line{}{\le}{
            0
            -
            \frac{\log \qty( (1 - \s) \frac{\s}{K} )}{\eta_{\tau_1}}
            +
            \sum_{t \in I \setminus \{\tau_1\}}
                \log\qty( \frac{p_t^{\frac{1}{\eta_{t-1}} - \frac{1}{\eta_t}}(a^*)}{(1-\s)^{1/\eta_t}} )
        }{p_t(a) \in \qty[\frac{\s}{K}, 1]}
        \nonumber\\
        \Line{}{\le}{
            - \frac{\log \qty( (1 - \s) \frac{\s}{K} )}{\eta_{\tau_1}}
            +
            \sum_{t \in I \setminus \{\tau_1\}}
                \qty( \frac{1}{\eta_{t-1}} - \frac{1}{\eta_t} ) \log \frac{\s}{K}
        }{}
        \nonumber\\
        \Line{}{}{
            \quad +
            \sum_{t \in I \setminus \{\tau_1\}}
                \frac{1}{\eta_t}\log \frac{1}{1-\s}
        }{p_t(a) \ge \frac{\s}{K}}
        \nonumber\\
        \Line{}{=}{
            - U_{\tau_1} \frac{\log \qty( (1 - \s) \frac{\s}{K} )}{\theta}
            +
            \frac{1}{\theta} \log \frac{\s}{K} \sum_{t \in I \setminus \{\tau_1\}}
                \qty( U_{t-1} - U_t )
        }{}
        \nonumber\\
        \Line{}{}{
            \quad +
            U_{\tau_2}\frac{|I| - 1}{\theta}\log \frac{1}{1-\s}
        }{\eta_t = \frac{\theta}{U_t}}
        \nonumber\\
        \Line{}{\le}{
            - U_{\tau_2} \frac{\log \qty( (1 - \s) \frac{\s}{K} )}{\theta}
            -
            \frac{U_{\tau_2}}{\theta} \log \frac{\s}{K}
            +
            U_{\tau_2}\frac{|I| - 1}{\theta}\log \frac{1}{1-\s}
        }{}
        \nonumber\\
        \Line{}{=}{
            U_{\tau_2} \qty(
                \sqrt{T K} \log \frac{K T}{1 - \frac{1}{T}}
                +
                \sqrt{T K} \log (T K)
                +
                T^{3/2} \sqrt K \log \frac{1}{1 - \frac{1}{T}}
            )
        }{\substack{\s = \frac{1}{T}\\\theta = \frac{1}{\sqrt{K T}}}}
        \nonumber\\
        \Line{}{=}{
            U_{\tau_2} \sqrt{T K} \qty(
                \log \frac{K T^2}{T - 1}
                +
                \log (T K)
                +
                \sqrt{T} \log \frac{T}{T - 1}
            )
        }{}
        \nonumber\\
        \Line{}{\le}{
            U_{\tau_2} \sqrt{T K} \qty(
                2 \log \qty( K T^2 )
                +
                2 \frac{1}{\sqrt{T}}
            )
        }{\log \frac{T}{T - 1} \le \frac{2}{T}}
        \nonumber\\
        \Line{}{\le}{
            U_{\tau_2} \order{
                \sqrt{T K} \log(T K)
            }
        }{}
    \end{alignat}
    
    Now we bound
    \begin{equation}\label{eq:bandit:34}
        \sum_a p_t(a) \tilde \ell_t^2(a)
        =
        \sum_a p_t(a) \tilde \ell_t(a) \frac{\ell_t(a)}{p_t(a) + \xi}
        \le
        U_t \sum_a \tilde \ell_t(a)
    \end{equation}
    
    Using \eqref{eq:bandit:33} and \eqref{eq:bandit:34} in \eqref{eq:bandit:32}, and substituting $\eta_t = \theta/ U_t$ we get
    \begin{equation}\label{eq:bandit:39}
        \sum_{t \in I}\sum_a p_t(a) \tilde \ell_t(a)
        -
        \sum_{t \in I}\tilde \ell_t(a^*)
        \le
        U_{\tau_2} X
        +
        \frac{\theta}{2} \sum_{t \in I} \sum_a \tilde \ell_t(a)
    \end{equation}
    
    Using a slight modification of \cite[Corollary 1]{DBLP:conf/nips/Neu15} we get that
    \begin{equation}\label{eq:bandit:35}
        \Pr{
            \forall a \in [K]:\;
            \sum_{t \in I} \qty( \tilde \ell_t(a) - \ell_t(a) )
            \le
            U_{\tau_2}\frac{\log(K/\d)}{2 \xi}
        }
        \ge
        1 - \d
    \end{equation}
    
    Now we bound
    \begin{equation}\label{eq:bandit:36}
        \sum_a p_t(a) \tilde \ell_t(a)
        =
        p_t(a_t) \frac{\ell_t(a_t)}{p_t(a_t) + \xi}
        \ge
        \ell_t(a_t) - \ell_t(a_t) \frac{\xi}{p_t(a_t) + \xi}
        =
        \ell_t(a_t) - \xi \sum_a \tilde \ell_t(a)
    \end{equation}
    
    We now combine all the above to bound the regret. Assume that \eqref{eq:bandit:35} is true; for every $a^* \in [K]$:
    \begin{alignat*}{3}
        \sum_{t \in I} & \qty\big( \ell_t(a_t) - \ell_t(a^*) )
        \\
        \Line{}{\le}{
            \sum_{t \in I} \sum_a p_t(a) \tilde \ell_t(a)
            +
            \xi \sum_{t \in I} \sum_a \tilde \ell_t(a)
            -
            \sum_{t \in I} \ell_t(a^*)
        }{\text{by \eqref{eq:bandit:36}}}
        \\
        \Line{}{\le}{
            \sum_{t \in I} \sum_a p_t(a) \tilde \ell_t(a)
            +
            \xi \sum_{t \in I} \sum_a \tilde \ell_t(a)
            -
            \sum_{t \in I} \tilde \ell_t(a^*)
            +
            U_{\tau_2} \frac{\log(K/\d)}{2\xi}
        }{\text{by \eqref{eq:bandit:35}}}
        \\
        \Line{}{\le}{
            U_{\tau_2} \order{ \sqrt{T K} \log(T K) }
            +
            \frac{\theta}{2} \sum_{t \in I} \sum_a \tilde \ell_t(a)
            +
            \xi \sum_{t \in I} \sum_a \tilde \ell_t(a)
            +
            U_{\tau_2} \frac{\log(K/\d)}{2\xi}
        }{\text{by \eqref{eq:bandit:39}}}
        \\
        \Line{}{=}{
            U_{\tau_2} \order{ \sqrt{T K} \log(T K) }
            +
            U_{\tau_2} \frac{\log(K/\d)}{2\xi}
            +
            \qty( \frac{\theta}{2} + \xi ) \sum_{t \in I} \sum_a \tilde \ell_t(a)
        }{}
        \\
        \Line{}{\le}{
            U_{\tau_2} \order{ \sqrt{T K} \log(T K) }
            +
            U_{\tau_2} \frac{\log(K/\d)}{2\xi}
            +
            \qty( \frac{\theta}{2} + \xi ) K \qty( U_{\tau_2} |I| + U_{\tau_2}\frac{\log(K/\d)}{2 \xi} )
        }{\text{by \eqref{eq:bandit:35}}}
        \\
        \Line{}{=}{
            U_{\tau_2} \order{ \sqrt{T K} \log(T K) }
            +
            U_{\tau_2} \sqrt{T K}\log(K/\d)
            +
            \frac{1}{\sqrt{TK}} K \qty( U_{\tau_2} T + U_{\tau_2} \sqrt{T K} \log(K/\d) )
        }{\substack{\xi = \frac{1}{2 \sqrt{T K}}\\\theta = \frac{1}{\sqrt{T K}}}}
        \\
        \Line{}{=}{
            U_{\tau_2} \order{ \sqrt{T K} \log(T K / \d) }
            % +
            % U_{\tau_2} \sqrt{T K}\log(K/\d)
            +
            U_{\tau_2} \frac{\sqrt{K}}{\sqrt{T}} \qty( T + \sqrt{T K} \log(K/\d) )
        }{}
        \\
        \Line{}{=}{
            U_{\tau_2} \order{
                \sqrt{T K} \log(T K / \d) 
                +
                K \log(K/\d) 
            }
        }{}
    \end{alignat*}
    which proves the desired bound.
\end{proof}

We now proceed to prove \cref{thm:bandit_ub}.

\begin{proof}[Proof of \cref{thm:bandit_ub}]
    Fix $N = \lceil L^{3/4} T^{1/4} \rceil$ and $K = \lceil T^{1/4} / L^{1/4} \rceil$.
    Let $\tilde v_i = i/N$ for $i \in [N]$ and $\tilde b_j = j/N$ for $j \in [N]$.
    For every $i\in [N]$ define $r_t^i: [K] \to \R_{\ge 0}$ such that
    \begin{equation*}
        r_t^i(j) = \One{\tilde b_{t,j} \ge d_t} \qty(
            \chi_t \tilde v_i - \psi_t p(\tilde b_{t,j}, d_t)
        )
    \end{equation*}
    where $\tilde b_{t,j}$ is either $\tilde b_j$ or the safe bid of $\tilde r_t^i(\cdot)$.
    
    Define for every round $i_t = \lceil v_t N \rceil$ (i.e. the $i$ that corresponds to the value that is closest and above to $v_t$).
    Fix $i \in [N]$ and let $\calT_i \sub [T]$ be the rounds where $i = i_t$; let $T_i = |\calT_i|$.
    Let $\calA_i$ be an instance of \cref{algo:bandit} which is run only on rounds $\calT_i$.
    $\calA_i$ has $K$ actions.
    The reward of the $j$-th action in round $t$ is $\tilde r_t^i(j)$.
    Let $j_t$ be the output of $\calA_i$ in a round $t\in\calT_i$, which we use to bid $\tilde b_{t,j_t}$.
    Because of \cref{thm:bandit:gen_algo} we have that for every $\d > 0$ with probability at least $1 - \d$, for every $\tau_1 \le \tau_2$
    \begin{equation*}
        \max_{j \in [K]} \sum_{t \in \calT_i \cap [\tau_1, \tau_2]} \tilde r_t^i(j)
        -
        \sum_{t \in \calT_i \cap [\tau_1, \tau_2]} \tilde r_t^i(j_t)
        \le
        U_{\tau_2} \order{
            \qty( \sqrt{T_i K} + K ) \log \frac{T_i K}{\d}
        }
    \end{equation*}
    
    We do the above process for every $i$ and use it as an algorithm\footnote{For rounds where $v_t = 0$ we have not defined an algorithm; in those rounds we can bid $0$ (which is optimal) to guarantee no regret; for simplicity however we assume that $v_t > 0$.}.
    Doing this for every $i$ and using the union bound we get that for every $\d > 0$ with probability at least $1 - \d$, for every $i \in [N]$ and $\tau_1 \le \tau_2$
    \begin{alignat}{3} \label{eq:bandit:60}
        \Line{
            \max_{j \in [K]} \sum_{t \in \calT_i \cap [\tau_1, \tau_2]} \tilde r_t^{i_t}(j)
            -
            \sum_{t \in \calT_i \cap [\tau_1, \tau_2]} \tilde r_t^{i_t}(j_t)
        }{\le}{
            U_{\tau_2} \order{
                \qty( \sqrt{T_i K} + K ) \log \frac{T_i K N}{\d}
            }
        }{}
        \nonumber \\
        \Line{}{=}{
            U_{\tau_2} \order{
                \qty( \sqrt{T_i \frac{T^{1/4}}{L^{1/4}}} + \frac{T^{1/4}}{L^{1/4}} ) \log \frac{L T}{\d}
            }
        }{\substack{N = \lfloor L^{3/4} T^{1/4}\rfloor\\K =\lfloor T^{1/4} / L^{1/4}\rfloor}}
    \end{alignat}
    
    Fix $\tau_1, \tau_2$.
    We want to use \eqref{eq:bandit:60} to upper bound
    \begin{equation*}
        \sup_{f \in \calF} \sum_{t=\tau_1}^{\tau_2} r_t(f)
        -
        \sum_{t=\tau_1}^{\tau_2} r_t(\tilde b_{t, j_t})
    \end{equation*}
    
    Fix $f \in \calF$ and define $j_1, \ldots, j_N \in [K]$ such that $\tilde b_{j_i}$ (recall $\tilde b_j = j/K$) is the bid that is above $f(v)$ for every $v \in (v_{i-1}, v_i]$ and as small as possible, i.e., for all $i \in [N]$ it holds
    \begin{equation*}
        \tilde b_{j_i}
        =
        \left\lceil
            K \sup_{v \in (\tilde v_{i-1}, \tilde v_i]} f(v)
        \right\rceil / K
    \end{equation*}
    We notice that for all $i\in [N]$ and $v \in (v_{i-1}, v_i]$: $\tilde b_{j_i} \in [f(v), f(v) + \frac{L}{N} + \frac{1}{K}]$ (by $L$-Lipschitz of $f$).
    We now have
    \begin{alignat*}{3}
        \Line{
            r_t\qty\big(f)
        }{=}{
            \One{f(v_t) \ge d_t} \qty( \chi_t v_t - \psi_t p\qty\big( f(v_t) , d_t ) )
        }{}
        \\
        \Line{}{\le}{
            \One{\tilde b_{j_{i_t}} \ge d_t} \qty( \chi_t \tilde v_{i_t} - \psi_t p\qty( \tilde b_{j_{i_t}} - \frac{L}{N} - \frac{1}{K} , d_t ) )^+
        }{}
        \\
        \Line{}{\le}{
            \One{\tilde b_{j_{i_t}} \ge d_t} \qty( \chi_t \tilde v_{i_t} - \psi_t p\qty( \tilde b_{j_{i_t}}, d_t ) )^+
            +
            \psi_t \qty( \frac{L}{N} + \frac{1}{K} )
        }{}
        \\
        \Line{}{\le}{
            \tilde r_t^{i_t}(j_{i_t})
            +
            \psi_t \qty( \frac{L}{N} + \frac{1}{K} )
        }{}
    \end{alignat*}
    where the last inequality follows because the bid that corresponds to $j_{i_t}$ in round $t$ is the safe bid of that round whose reward is non-negative and at least as good as $\tilde b_{j_{i_t}}$'s.
    Summing the above over $t \in [\tau_1, \tau_2]$ and taking a sup over $f$ and a max over $j_1,\ldots, j_N$ we get
    \begin{alignat}{3} \label{eq:bandit:61}
        \Line{
            \sup_{f \in \calF} \sum_{t = \tau_1}^{\tau_2} r_t(f)
        }{\le}{
            \max_{j_1,\ldots,j_N} \sum_{t = \tau_1}^{\tau_2} \tilde r_t^{i_t}(j_{i_t})
            +
            U_{\tau_2} T \qty( \frac{L}{N} + \frac{1}{K} )
        }{}
        \nonumber\\
        \Line{}{=}{
            \sum_{i\in[N]} \max_{j \in [K]} \sum_{t \in \calT_i \cap [\tau_1, \tau_2]} \tilde r_t^{i_t}(j)
            +
            U_{\tau_2} T \qty( \frac{L}{N} + \frac{1}{K} )
        }{}
        \nonumber\\
        \Line{}{\le}{
            \sum_{i\in[N]} \max_{j \in [K]} \sum_{t \in \calT_i \cap [\tau_1, \tau_2]} \tilde r_t^{i_t}(j)
            +
            U_{\tau_2} 2 T^{3/4} L^{1/4}
        }{\substack{N \ge L^{3/4} T^{1/4}\\K \ge T^{1/4} / L^{1/4}}}
        \nonumber\\
        \Line{}{\le}{
            \sum_{t \in [\tau_1, \tau_2]} \tilde r_t^{i_t}(j_t)
            +
            \sum_{i \in [N]} U_{\tau_2} \order{
                \qty( \sqrt{T_i \frac{T^{1/4}}{L^{1/4}}} + \frac{T^{1/4}}{L^{1/4}} ) \log \frac{L T}{\d}
            }
            +
            U_{\tau_2} 2 T^{3/4} L^{1/4}
        }{\text{by \eqref{eq:bandit:60}}}
        \nonumber\\
        \Line{}{=}{
            \sum_{t \in [\tau_1, \tau_2]} \tilde r_t^{i_t}(j_t)
            +
            U_{\tau_2} \order{
                \qty(  L^{1/4} T^{3/4} + T^{1/2} L^{1/2} ) \log \frac{L T}{\d}
                +
                T^{3/4} L^{1/4}
            }
        }{N = \lfloor L^{3/4} T^{1/4}\rfloor}
        \nonumber\\
        \Line{}{=}{
            \sum_{t \in [\tau_1, \tau_2]} \tilde r_t^{i_t}(j_t)
            +
            U_{\tau_2} \order{
                \qty(  L^{1/4} T^{3/4} + T^{1/2} L^{1/2} ) \log \frac{L T}{\d}
            }
        }{}
    \end{alignat}
    where the second to last inequality holds by the fact that $\sum_i \sqrt{T_i} \le \sqrt{N T}$ since $\sum_i T_i = T$.
    By noticing that $\tilde r_t^i(j_t) \le r_t(\tilde b_{t, j_t}) + U_t(1/K + 1/N)$ and bounding $T(1/N + 1/K) = O(L^{1/4} T^{3/4})$ we get the high probability interval regret bound on the rewards $r_t(\cdot)$.
    Using \cref{thm:pre:primal_dual} we get the desired result.
    
    The tight satisfaction of the ROI constraint follows by using \cref{lem:tight:reduction}.
    The first algorithm is the one we describe above.
    The second algorithm is the primal algorithm described above with $\chi_t = \psi_t = 1$, which achieves $Q(\tau, \d) = \tau \b - \order*{ (  L^{1/4} \tau^{3/4} + \tau^{1/2} L^{1/2} ) \log \frac{L T}{\d} }$.
\end{proof}

\section{Deferred proofs of Section \ref{sec:poly}} \label{sec:app:poly}

% \begin{algorithm}[t]
% % \SetAlgoNoLine
% \DontPrintSemicolon
% \caption{Full information polynomial time algo for auctions}
% \label{algo:poly}
% \KwIn{Number of rounds $T$, Lipschitz parameter $L$}

% Set $N = L^{2/3} T^{1/3}$ and $K = T^{1/3}$\;

% Let $\tilde v_i = \frac{i}{N}$ for $i \in [N]$
% \tcp*{Discretization of values}

% Let $\tilde b_i = \frac{j}{N}$ for $j \in [K]$
% \tcp*[f]{Discretization of bids}

% Initialize algorithm $\calA_i$ \cref{algo:full:good}, for $i \in [N]$\;

% \For{$t \in [T]$}
% {
%     Receive $v_t, \chi_t, \psi_t$ and set $U_t = \max_{t' \le t}\{\chi_{t'}, \psi_{t'}\}$\;
    
%     Let $\tilde b_{t,j} = \text{safe}(\tilde b_i)$\;
    
%     Pass $U_t$ into each $\calA_i$\;
    
%     Let $i_t = \lceil v_t N \rceil/N$\;
    
%     Receive distribution $p_t^{i_t}(\cdot)$ from $\calA_{i_t}$\;
    
%     Bid $\tilde b_{t,j_t}$ where $j_t \sim p_t^{i_t}(\cdot)$\;
    
%     Receive $d_t$\;
    
%     Set $\tilde r_t^i(j) = \One{\tilde b_{t,j} \ge d_t} \qty\big( \chi_t \tilde v_i - \psi_t p(\tilde b_{t,j}, d_t) )$ for $i \in [N]$ and $j \in [K]$\;
    
%     Pass each $\tilde r_t^i(\cdot)$ to each $\calA_i$\;
% }
% \end{algorithm}

\begin{proof}[Proof of \cref{thm:poly}]
    We use a scheme similar to the one used in the proof of \cref{thm:bandit_ub}.
    Fix $N = \lfloor L T \rfloor$ and $K = T$.
    Let $\tilde v_i = i/N$ for $i \in [N]$ and $\tilde b_j = j/N$ for $j \in [N]$.
    For every $i\in [N]$ define $r_t^i: [K] \to \R_{\ge 0}$ such that
    \begin{equation*}
        r_t^i(j) = \One{\tilde b_{t,j} \ge d_t} \qty(
            \chi_t \tilde v_i - \psi_t p(\tilde b_{t,j}, d_t)
        )
    \end{equation*}
    where $\tilde b_{t,j}$ is either $\tilde b_j$ or the safe bid (\cref{asmp:full}) of $\tilde r_t^i(\cdot)$.
    
    Define for every round $i_t = \lceil v_t N \rceil$ (i.e. the $i$ that corresponds to the value that is closest and above to $v_t$).
    For every $i\in[N]$ let $\calA_i$ be an instance of the algorithm in \cref{thm:full:good1} for $\Delta = 1$ with $K$ actions that has also been modified to have low interval regret, using \cref{thm:full:interval}.
    Unlike the bandit setting, here we run $\calA_i$ even in rounds where $i_t \ne i$.
    In every round $\calA_i$ outputs an action $j_t^i \in [K]$ which corresponds to the bid $\tilde b_{t,j_t^i}$.
    The reward of the $j$-th action in round $t$ is $\tilde r_t^i(j)$.
    The bid we use in round $t$ is the one suggested by algorithm $\calA_{i_t}$, $\tilde b_{t,j_t^{i_t}}$.
    
    Using \cref{thm:full:good1,thm:full:interval}, an application of Azuma's inequality and a union bound over $i$, we have that with probability at least $1 - \d$ for every $1 \le \tau_1 < \tau_2 \le T$ and every $i \in [N]$:
    \begin{equation}\label{eq:poly:60}
        \max_{j \in [K]} \sum_{t=\tau_1}^{\tau_2} \tilde r_t^i(j)
        -
        \sum_{t=\tau_1}^{\tau_2} \tilde r_t^i(j_t^i)
        \le
        U_{\tau_2} \order{ \sqrt{T \log (T K N / \d)} }
    \end{equation}
    
    Fix $\tau_1, \tau_2$.
    To get our regret bound we have to upper bound 
    \begin{equation*}
        \sup_{f \in \calF} \sum_{t=\tau_1}^{\tau_2} r_t(f)
        -
        \sum_{t=\tau_1}^{\tau_2} \sum_i \One{i_t = i} r_t(\tilde b_{t, j_t^i})
    \end{equation*}
    
    Fix $f \in \calF$ and define $j_1, \ldots, j_N \in [K]$ such that for all $i \in [N]$
    \begin{equation*}
        \tilde b_{j_i}
        =
        \left\lceil
            K \sup_{v \in (\tilde v_{i-1}, \tilde v_i]} f(v)
        \right\rceil / K
    \end{equation*}
    We notice that for all $i\in [N]$ and $v \in (v_{i-1}, v_i]$: $\tilde b_{j_i} \in [f(v), f(v) + \frac{L}{N} + \frac{1}{K}]$ (by $L$-Lipschitz of $f$).
    We now have
    \begin{alignat*}{3}
        \Line{
            r_t\qty\big(f)
        }{=}{
            \One{f(v_t) \ge d_t} \qty( \chi_t v_t - \psi_t p\qty\big( f(v_t) , d_t ) )
        }{}
        \\
        \Line{}{\le}{
            \One{\tilde b_{j_{i_t}} \ge d_t} \qty( \chi_t \tilde v_{i_t} - \psi_t p\qty( \tilde b_{j_{i_t}} - \frac{L}{N} - \frac{1}{K} , d_t ) )^+
        }{}
        \\
        \Line{}{\le}{
            \One{\tilde b_{j_{i_t}} \ge d_t} \qty( \chi_t \tilde v_{i_t} - \psi_t p\qty( \tilde b_{j_{i_t}}, d_t ) )^+
            +
            \psi_t \qty( \frac{L}{N} + \frac{1}{K} )
        }{p(\cdot,d): 1-\text{Lipschitz}}
        \\
        \Line{}{\le}{
            \tilde r_t^{i_t}(j_{i_t})
            +
            \psi_t \qty( \frac{L}{N} + \frac{1}{K} )
        }{}
    \end{alignat*}
    where the last inequality follows because the bid that corresponds to $j_{i_t}$ in round $t$ is the safe bid of that round whose reward is non-negative and at least as good as $\tilde b_{j_{i_t}}$'s.
    Summing the above over $t \in [\tau_1, \tau_2]$ and taking a sup over $f$ and a max over $j_1,\ldots, j_N$ we get
    \begin{alignat}{3} \label{eq:poly:61}
        \Line{
            \sup_{f \in \calF} \sum_{t = \tau_1}^{\tau_2} r_t(f)
        }{\le}{
            \max_{j_1,\ldots,j_N} \sum_{t = \tau_1}^{\tau_2} \tilde r_t^{i_t}(j_{i_t})
            +
            U_{\tau_2} T \qty( \frac{L}{N} + \frac{1}{K} )
        }{}
        \nonumber\\
        \Line{}{=}{
            \sum_{i\in[N]} \max_{j^* \in [K]} \sum_{t = \tau_1}^{\tau_2} \One{i_t = i} \tilde r_t^i(j^*)
            % \sum_{i\in[N]} \One{i_t = i} \max_{j \in [K]} \sum_{t \in \calT_i \cap [\tau_1, \tau_2]} \tilde r_t^{i_t}(j)
            +
            2 U_{\tau_2}
        }{\substack{N \ge \lfloor T L \rfloor\\K = T}}
    \end{alignat}
    
    We now prove a high probability bound on $\sum_{t = \tau_1}^{\tau_2} \One{i_t = i} \tilde r_t^i(j^*)$, for a fixed $j^*$.
    Using Azuma's inequality we can prove that for all $\d > 0$ with probability at least $1 - \d$,
    \begin{equation*}
        \sum_{t = \tau_1}^{\tau_2} \One{i_t = i} \tilde r_t^i(j^*)
        \le
        \sum_{t = \tau_1}^{\tau_2} \Pr{i_t = i} \tilde r_t^i(j^*)
        +
        \order{ U_{\tau_2} \sqrt{T \log(1/\d)} }
    \end{equation*}
    Note that to use Azuma's inequality we have to rely on the fact that $i_t$ and $\tilde r_t^i(j^*)$ are independent conditioned on the history of rounds up to $t-1$, since $\tilde r_t^i(j^*)$ does not depend on $v_t$ but only $d_t$
    
    By letting $\Pr{i_t = i} = q_i$ and taking a union bound over all $j* \in [K]$ and all $i \in [N]$, we have that with probability at least $1-\d$ for all $j_1^*$, $i$, and $\tau_1, \tau_2$ we have 
    \begin{equation*}
        \sum_{t = \tau_1}^{\tau_2} \One{i_t = i} \tilde r_t^i(j^*)
        \le
        \sum_{t = \tau_1}^{\tau_2} q_i \tilde r_t^i(j^*)
        +
        \order{ U_{\tau_2} \sqrt{T \log\qty(T K N /\d)} }
    \end{equation*}
    
    Substituting $N$ and $K$, the above makes \eqref{eq:poly:61}
    \begin{alignat}{3}\label{eq:poly:62}
        \Line{
            \sup_{f \in \calF} \sum_{t = \tau_1}^{\tau_2} r_t(f)
        }{\le}{
            \sum_{i\in[N]} q_i \max_{j^* \in [K]} \sum_{t = \tau_1}^{\tau_2} \tilde r_t^i(j^*)
            +
            \order{ U_{\tau_2} \sqrt{T \log\qty(T L /\d)} }
        }{}
    \end{alignat}
    
    We now bound $q_i \max_{j^*} \sum_{t = \tau_1}^{\tau_2} \tilde r_t^i(j^*)$ for some $i$.
    By \eqref{eq:poly:60} we have
    \begin{alignat*}{3}
        \Line{
            q_i \max_{j^*} \sum_{t = \tau_1}^{\tau_2} \tilde r_t^i(j^*)
        }{\le}{
            q_i \sum_{t = \tau_1}^{\tau_2} \tilde r_t^i(j_t^i)
            +
            q_i U_{\tau_2} \order{ \sqrt{T \log (T L / \d)} }
        }{}
        \\
        \Line{}{\le}{
            \sum_{t = \tau_1}^{\tau_2} \One{i_t = i} \tilde r_t^i(j_t^i)
            +
            q_i U_{\tau_2} \order{ \sqrt{T \log (T K N / \d)} }
            +
            q_i U_{\tau_2} \order{ \sqrt{T \log (T L / \d)} }
        }{}
        \\
        \Line{}{=}{
            \sum_{t = \tau_1}^{\tau_2} \One{i_t = i} \tilde r_t^i(j_t^i)
            +
            q_i U_{\tau_2} \order{ \sqrt{T \log (T L / \d)} }
        }{}
    \end{alignat*}
    where in the last inequality we used Azuma's inequality, which heavily depends on the fact that $i_t$ and $\tilde r_t^i(j_t^i)$ are independent conditioned on the history of rounds up to $t-1$, since $\tilde r_t^i(j_t^i)$ does not depend on $v_t$ but only $d_t$.
    Plugging the above into \eqref{eq:poly:62} we get
    \begin{alignat*}{3}
        \Line{
            \sup_{f \in \calF} \sum_{t = \tau_1}^{\tau_2} r_t(f)
        }{\le}{
            \sum_{i\in[N]} \sum_{t = \tau_1}^{\tau_2} \One{i_t = i} \tilde r_t^i(j_t^i)
            +
            U_{\tau_2} \order{ \sqrt{T \log (T L / \d)} }
        }{}
    \end{alignat*}
    
    By noticing that $\tilde r_t^i(j_t) \le r_t(\tilde b_{t, j_t}) + U_t(1/K + 1/N)$ and bounding $T(1/N + 1/K) = O(1)$ we get the $U_{\tau_2} \order{ \sqrt{T \log (T L / \d)} }$ high probability interval regret bound on the rewards $r_t(\cdot)$.
    Using \cref{thm:pre:primal_dual} we get the desired result.
    
    The tight satisfaction of the ROI constraint follows by using \cref{lem:tight:reduction}.
    The first algorithm is the one we describe above.
    The second algorithm is the primal algorithm described above with $\chi_t = \psi_t = 1$, which achieves $Q(\tau, \d) = \tau \b - \order*{ \sqrt{\tau \log (\tau L / \d)} }$.
\end{proof}

\printbibliography{}

\appendix{}

\section{Deferred proofs of Section \ref{sec:tight_roi}} \label{sec:app:tight}

We first prove the reduction of how to turn approximate ROI satisfaction to an exact one.

\begin{proof}[Proof of \cref{lem:tight:reduction}]
    We first notice that since $\calA_2$ never has value less than payment and $\calA_1$ is run only when the accumulated value is at least $1$ higher than the payment, the ROI constraint is never going to be violated.
    Now we need to prove the total value guarantee.
    
    Assume that the high probability bounds of the two algorithms are true (which happens with probability at least $1 - 2\d$ for any $\d$ due to the union bound).
    Let $\tau$ be the last round when algorithm $\calA_2$ is run.
    Let $\mathcal T_1$ be the rounds up to $\tau$ where algorithm $\calA_1$ is run and $\mathcal T_2$ be the rounds up to $\tau$ where algorithm $\calA_2$ is run; note that $|\mathcal T_2|$ is the total number of rounds $\calA_2$ is run in total.
    We now have
    \begin{equation*}
        \sum_{t \in [\tau]} \One{b_t \ge d_t}(v_t - p(b_t, d_t))
        \ge
        Q(|\calT_2|, \d)
        -
        V(|\calT_1|, \d)
    \end{equation*}
    where the inequality follows from the ROI guarantees of the two algorithms.
    Using the fact that on round $\tau$ we run $\calA_2$ which upper bounds the above quantity by $2$ we get
    \begin{equation*}
        Q(|\calT_2|, \d)
        \le
        2 + V(|\calT_1|, \d)
        \le
        2 + V(T, \d)
    \end{equation*}
    
    Using the definition that $Q^{-1}(V(T, \d), \d)$ is the solution to $Q(x, \d) = V(T, \d)$ we get
    \begin{equation*}
        |\calT_2|
        \le
        Q^{-1}(V(T, \d), \d)
    \end{equation*}
    
    This proves that the total number of rounds $\calA_1$ was run is at least $T - Q^{-1}(1 + V(T))$.
    This means that the total regret is at most
    \begin{equation*}
        \reg_1\qty( T - Q^{-1}(1 + V(T)) )
        +
        2 Q^{-1}(1 + V(T))
    \end{equation*}
    where the second term represents the rounds $\calA_2$ was run instead of $\calA_1$ and the loss because of the budget consumption of $\calA_2$, which is at most $Q^{-1}(V(T, \d), \d)$ making the overall algorithm run out of budget $Q^{-1}(V(T, \d), \d)$ rounds earlier, missing out on that much utility.
\end{proof}

We now prove \cref{thm:tight:main_tight}.

\begin{proof}[Proof of \cref{thm:tight:main_tight}]
    As we explained in \cref{sec:tight_roi} there are algorithms that satisfy the assumptions of \cref{lem:tight:reduction} with
    \begin{equation*}
        \reg(T, \d)
        =
        V(T, \d)
        =
        \frac{1}{\rho \b} \, \order{
            \sqrt{L T} \log T
            +
            \sqrt{T \log(1/\d)}
        }
    \end{equation*}
    and
    \begin{equation*}
        Q(T_2, \d)
        =
        \b T_2
        -
        \order{
            \sqrt{L T_2} \log T_2
            +
            \sqrt{T_2 \log(T_2/\d)}
        }
        =
        \Omega \qty( \b T_2 )
    \end{equation*}
    where the last inequality follows from
    \begin{equation*}
        \b
        =
        \omega \qty(
            \sqrt{\frac{L}{T}} \log T
            +
            \sqrt{\frac{\log(T/\d)}{T}}
        )
    \end{equation*}
    
    The theorem follows by algebraic calculations.
\end{proof}

We now prove \cref{thm:tight:beta}.

\begin{proof}[Proof of \cref{thm:tight:beta}]
    Each round the value/highest-competing-bid distribution is the following
    \begin{equation*}
        (v_t, d_t) =
        \begin{cases}
            (1, 0), & \text{ w.p. } \b \\
            \left( \frac{1 - 2\b}{1 - \b}, 1 \right), & \text{ w.p. } 1-\b
        \end{cases}
    \end{equation*}
    
    It is not hard to see that the LP optimum bids $1$ every round and on expectation has reward $\opt = 1 - \b$ and ROI violation of $0$.
    
    Let $R_t$ be the cumulative ``ROI amount'' that any algorithm has collected in round $t$, i.e. the total value minus the total price paid.
    Because the algorithm needs to satisfy the ROI constraint with probability $1$, for all rounds it must hold that $R_t \ge 0$.
    This means that the best any algorithm can do is to bid $1$ in round $t$ if $R_{t-1} \ge \frac{\b}{1 - \b}$ (thus guaranteeing to always win) or bid less than $1$ if $R_{t-1} < \frac{\b}{1 - \b}$ (thus guaranteeing to win only if $d_t = 0$).
    
    Let $N_t$ be the number of rounds the algorithm bid less than $1$ and missed an item with $d_t = 1$.
    The regret of the algorithm is $N_T \frac{1 - 2\b}{1 - \b}$ so we need to calculate $\Ex{N_T}$.
    
    Let $R_t'$ be the cumulative ``ROI amount'' if the algorithm won every item. We notice that
    \begin{equation*}
        R_t - R_t'
        =
        \frac{\b}{1 - \b} N_t
    \end{equation*}
    since the difference does not change from round to round if the algorithm wins the item but increases if the the algorithms misses a high value item. Since $R_t \ge 0$ and it holds that
    \begin{equation*}
        N_T
        =
        \min_t N_t
        \ge
        - \frac{1 - \b}{\b} \min_t R_t'
    \end{equation*}

    We now note that $R_t'$ is an unbiased random walk, since
    \begin{equation*}
        R_t' - R_{t-1}'
        =
        \begin{cases}
            1, & \text{ w.p. } \b \\
            \frac{\b}{1 - \b}, & \text{ w.p. } 1-\b
        \end{cases}
    \end{equation*}
    and $\E[R_t' - R_{t-1}'] = 0$ and $\text{Var}[R_t' - R_{t-1}'] = \frac{\b}{1 - \b}$.
    We will show that $\Ex{\min_t R_t'} \le - \sqrt{T \frac{\b}{2\pi(1-\b)}}$, which in turn implies that
    \begin{equation*}
        \Ex{N_T}
        \ge
        \frac{1-\b}{\b}
        \sqrt{T \frac{\b}{2\pi(1-\b)}}
        =
        \sqrt{T \frac{1-\b}{2\pi\b}}
    \end{equation*}
    which implies that the expected regret of the algorithm wrt to the LP optimum is
    \begin{equation*}
        (1 - 2\b)
        \sqrt{T \frac{1}{2\pi\b(1-\b)}}
    \end{equation*}
    which proves the theorem.
    
    We now prove that $\Ex{\min_t R_t'} \le - \sqrt{T \frac{\b}{2\pi(1-\b)}}s$.
    We have that
    \begin{alignat*}{3}
        \Line{
            \Ex{\min_{t = 0, \ldots, T} R_t'}
        }{=}{
            \Ex{\min_{t = 0, \ldots, T} \min\left\{ 0, R_t' \right\}}
        }{}
        \\
        \Line{}{\le}{
            \Ex{ \min\left\{ 0, R_T' \right\}}
        }{}
        \\
        \Line{}{=}{
            - \frac{1}{2}\Ex{ \abs{ R_T' }}
        }{\substack{
            \Ex{ R_T' } = \Ex{ \max\{0 , R_T'\} } + \Ex{ \min\{0 , R_T'\} } = 0
            \\
            \Ex{\abs{R_T'}} = \Ex{ \max\{0 , R_T'\} } - \Ex{ \min\{0 , R_T'\} }
        }}
        \\
        \Line{}{\approx}{
            - \frac{1}{2} \Ex{\abs{
                G\qty(0, \sqrt{T \frac{\b}{1-\b}})
            }}
        }{\text{Central Limit Theorem for large } T}
        \\
        \Line{}{=}{
            - \frac{1}{2} \sqrt{T \frac{\b}{1-\b}} \sqrt{ \frac{2}{\pi} }
        }{}
    \end{alignat*}
    where the last equality uses standard facts of $G\qty\big(0, \sqrt{T \frac{\b}{1-\b}})$, the $0$-mean Gaussian with standard deviation $\sqrt{T \frac{\b}{1-\b}}$.
    
    Instead of using the central limit theorem one could explicitly use the fact that $R_T' = M \qty\big( 1 + \frac{\b}{1 - \b} ) - T \frac{\b}{1 - \b}$ where $M$ is a binomial random variable with $T$ tries and probability of success $b$ to calculate
    \begin{alignat*}{3}
        \Line{
            \Ex{ \abs{ R_T' }}
        }{=}{
            \qty( 1 + \frac{\b}{1 - \b} )\Ex{ \abs{ M - T \b }}
        }{}
        \\
        \Line{}{=}{
            \qty( 1 + \frac{\b}{1 - \b} )
            2 \lceil T \b \rceil
            (1 - \b)^{T + 1 - \lceil T \b \rceil}
            \b ^{\lceil T \b \rceil}
            \binom{T}{\lceil T \b \rceil}
        }{}
        \\
        \Line{}{=}{
            \sqrt{\frac{2 \b}{\pi(1-\b)}}\sqrt T
            -
            \order{\frac{1}{\sqrt T}}
        }{}
    \end{alignat*}
    which leads to a similar bound.
\end{proof}

\end{document}